\def\relu{\mathrm{ReLU}}
\def\bra#1{\langle#1|}
\def\ket#1{|#1\rangle}
\def\braket#1#2{\langle#1|#2\rangle}
\renewcommand\vec{\boldsymbol}
\begin{document}
\raggedbottom
\title{Bound on entanglement in neural quantum states}

\author{Nisarga Paul}
\email{npaul@caltech.edu}
\affiliation{Department of Physics, Massachusetts Institute of Technology, Cambridge, Massachusetts 02139, USA}
\affiliation{Department of Physics and Institute for Quantum Information and Matter,
California Institute of Technology, Pasadena, California 91125, USA}

\begin{abstract}
Variational wavefunctions offer a practical route around the exponential complexity of many-body Hilbert spaces, but their expressive power is often sharply constrained. Matrix product states, for instance, are efficient but limited to area law entangled states. Neural quantum states (NQS) are widely believed to overcome such limitations, yet little is known about their fundamental constraints. Here we prove that feed-forward neural quantum states acting on $n$ spins with $k$ scalar nonlinearities, under certain analyticity assumptions, obey a bound on entanglement entropy for any subregion: $S \leq c k\log n$, for a constant $c$. This establishes an NQS analog of the area law constraint for matrix product states and rules out volume law entanglement for NQS with $O(1)$ nonlinearities. We demonstrate analytically and numerically that the scaling with $n$ is tight for a wide variety of NQS. Our work establishes a fundamental constraint on NQS that applies broadly across different network designs, while reinforcing their substantial expressive power.
\end{abstract}

\maketitle

\paragraph*{Introduction.} A significant challenge in quantum many-body physics is representing and manipulating states in the exponentially large Hilbert space. A common strategy is to restrict to variational states tailored to the needs of the problem. Any variational state must strike a balance between efficiency and expressivity: at one extreme, general wavefunctions quickly become computationally intractable, while at the other, few-parameter wavefunctions may be insufficiently expressive. \par 

In tensor network-based states such as matrix product states (MPS)~\cite{white1992density,schollwock2011density,orus2019tensor,verstraete2008matrix,cirac2021matrix}, this tradeoff is well understood. While they are amenable to very efficient algorithms, their expressive power is constrained by an area law upper bound on entanglement. MPS faithfully describe ground states of gapped one-dimensional systems with short-range interactions, but cannot efficiently represent critical states or states with volume law entanglement. Knowing these limitations has been crucial: it underpins our understanding of their success for gapped ground states~\cite{hastings2007area,arad2013area,landau2015polynomial} and points to where new methods are needed~\cite{verstraete2004renormalization,PhysRevLett.99.220405}. 
\par 
Neural quantum states (NQS), introduced by Carleo and Troyer~\cite{carleo2017solving}, have emerged as an alternative variational class based on neural networks. They capitalize on the remarkable expressivity of neural networks and efficiency of modern optimizers for a huge number of parameters~\cite{sharir2022neural,levine2019quantum,raghu2017expressive}. In practice, NQS have proven competitive across condensed matter, quantum chemistry, and nuclear physics~\cite{luo2019backflow,cassella2023discovering,chen2024empowering,rende2024simple,teng2025solving,geier2025self,pfau2020ab,hermann2020deep,pfau2024accurate,yang2023deep,adams2021variational,gnech2024distilling,fore2025investigating}, often achieving state-of-the-art accuracy and competitive scaling compared to tensor network and quantum Monte Carlo~\cite{giuliani2023learning,PhysRevLett.134.079701,kufel2025approximately,gao2017efficient,deng2017quantum}. Yet, despite this empirical progress, our theoretical understanding of NQS remains nascent. In contrast to MPS, where our understanding of expressivity constraints is mature, clear constraints are largely lacking for NQS. Identifying such constraints, if they exist, is crucial for conceptual understanding and for deploying targeted solutions.
\par 

Due to their variability and flexibility, establishing general constraints for NQS is a difficult task. Moreover, by the universal approximation theorem~\cite{hornik1989multilayer}, sufficiently large NQS can in principle approximate any state. Therefore, it is important to refine the scope of the problem. In this Letter, we choose to restrict the number of scalar nonlinearities $k$ in the computational graph, \textit{i.e.} the number of elementary nonlinear operations or neurons. Specifically, we consider families of NQS defined for arbitrary system sizes $n$, but with $k$ growing strictly slower than $O(n/\log n)$; for clarity, we often state results in the regime $k= O(1)$. Restricting to these NQS has two advantages. First, these states are computationally cheaper, typically reducing the cost of variational Monte Carlo (VMC). Second, as we'll show, bounding $k$ allows us to derive a sharp constraint on entanglement.  

We prove that for these NQS, entanglement entropy of a subregion $A$ satisfies $S_A = O(\mu\log n)$, where $\mu\leq k+1$ is the effective dimension of an intermediate feature space. Thus, no family of NQS can achieve volume law entanglement, which requires $S_A = O(|A|)$ for $|A|= O(n)$, unless $k$ is at least $O(n/\log n)$, subject to our assumptions. This provides a neural quantum state analog of the area law constraint in MPS. For the case $k=O(1)$, we demonstrate analytically and numerically that the scaling in $n$ is tight. The bound is quite agnostic to architectural choices, holding across a wide range of NQS. 
\par 
The key idea of our bound is that the amplitudes of a feed-forward NQS with $k$ nonlinearities only depends on at most $k+1$ ``collective variables", or affine combinations of the underlying degrees of freedom, rather than on all $n$ of them. Under a further analyticity assumption, which we argue is mild, this dependence can be approximated by sufficiently low-degree polynomials, which leads to the bound. In summary, the number of elementary nonlinear operations $k$ in the NQS architecture acts as a bottleneck for its entanglement. \par 

Beyond the main theorem, our method can be applied to a broad class of NQS to derive entanglement bounds, and should find further use. For example, in multilayer perceptrons with polynomial nonlinearities, we show that achieving volume law entanglement requires either large width or depth. \par 
We emphasize that though our main result is an upper bound, the bound is also saturated quite generally: for a variety of cases with $k\sim O(1)$ we find $\log n$ entanglement scaling. This surpasses the area law entanglement capacity of MPS in 1D, underscoring the remarkable expressive power of NQS even with very few nonlinearities. 
\par 


\par 

\paragraph*{Setup.} We consider a pure quantum state of $n$ discrete degrees of freedom $\vec s = (s_1,\ldots, s_n)$, which we refer to as spins but could also be fermion occupation numbers. The state is fully specified by the wavefunction amplitudes $\braket{\vec s}{\Psi} =  \Psi(\vec s)$. We assume $\Psi(\vec s)$ is given by a feed-forward neural network~\footnote{We assume a feed-forward neural network to mean the computational graph is directed and acyclic. Most architecture (e.g. multilayer perceptrons, convolutional neural networks, transformers) are included in this definition. Recurrent neural networks are not included in general; however, when evaluated for a finite horizon $T$ they can be unrolled into a DAG. Our entanglement bound then applies to the unrolled DAG with $k$ replaced by the effective number of scalar nonlinearities $k_{\rm eff}$.} and first establish some structural properties of $\Psi(\vec s)$.\par

The computation of $\Psi(\vec s)$ can generally~\cite{sharir2022neural} be described by a directed acyclic graph $(V,E)$ which maps inputs $s_1,\ldots, s_n$ to the output $\Psi(\vec s)$. By acyclic we mean there are no directed cycles in the graph. We illustrate an example of such a graph in Fig.~\ref{fig:dag_nqs}. The values of the nodes $v\in V$ are defined recursively by
\begin{equation}
    v(\vec s) = \sigma_v\left(b_v + \sum_{(u,v)\in E}w_{u,v}u(\vec s) + \sum_i w_{i,v} s_i\right)
\end{equation}
where $\sigma_v$ is a scalar nonlinearity, the first sum is over directed edges $(u,v)\in E$ pointing to $v$, and the second sum is over input spins, thus allowing skip connections. The weights and biases $b_v, w_{u,v}, w_{i,v}$ are network parameters for optimization. We'll assume the network parameters are real but allow the nonlinearities and network output to be complex. 
\par

\begin{figure}
\begin{tikzpicture}[x=1.3cm, y=0.5cm, >=stealth]
  \tikzstyle{node}=[circle,draw=black,minimum size=9pt,inner sep=0pt]
  \tikzstyle{input}=[node,fill=white!15]
  \tikzstyle{hidden}=[node,fill=green!15]
  \tikzstyle{output}=[node,fill=red!15]

  \node[input,label={$\vec s$}] (I1) at (0,-0.5) {};
  \node[input] (I2) at (0,-1.5) {};
  \node[input] (I3) at (0,-2.5) {};
  \node[input] (I4) at (0,-3.5) {};
  
  \node[hidden] (H1) at (1.5,-1.0) {};
  \node[hidden] (H2) at (1.5,-2.7) {};
  \node[hidden] (H3) at (3.0,-1.85) {};
  \node[hidden] (H4) at (2.2,-3.0) {};
  \node[hidden] (H5) at (2.5,-0.75) {};

  \node at (2.0,0.2) {$v \in V$};

  \node[output,label={$\Psi(\vec s)$}] (O) at (4.0,-1.8) {};

  \draw[->,thin] (I1) -- (H1);
  \draw[->,thin] (I2) -- (H1);
  \draw[->,thin] (I3) -- (H1);
  \draw[->,thin] (I4) -- (H1);

  \draw[->,thin] (I1) -- (H2);
  \draw[->,thin] (I2) -- (H2);
  \draw[->,thin] (I4) -- (H2);

  \draw[->,thin] (H1) -- (H3);
  \draw[->,thin] (H2) -- (H3);
  \draw[->,thin] (H2) -- (H4);

  \draw[->,thin] (H2) -- (H5);
  \draw[->,thin] (H4) -- (H5);
  \draw[->,thin] (H1) -- (H5);

  \draw[->,thin] (H3) -- (O);
  \draw[->,thin] (H4) -- (O);
  \draw[->,thin] (H5) -- (O);

\end{tikzpicture}
\caption{Example of a directed acyclic graph representation of a feed-forward neural quantum state. Input spins $\vec s=(s_1,\ldots, s_n)$ (white) are mapped through internal nodes $v\in V$ (green) to an output $\Psi(\vec s)$ (red). Internal nodes represent affine combinations followed by scalar nonlinearities, while the output node is an affine combination. We show how the number of nonlinearities $k=|V|$ acts as a bottleneck for entanglement.}
\label{fig:dag_nqs}
\end{figure}

\par 
Let us define an \textit{affine feature} of $\vec s$ to be a scalar of the form $t(\vec s) = \vec u^T \vec s + c$. It is instructive to interpret $\Psi$ as a function of some number of affine features of $\vec s$; that is, to find a function $\mathcal{G}$ such that 
\begin{equation}\label{eq:FequalG}
    \Psi(\vec s) = \mathcal{G}(t_1(\vec s),\ldots, t_\mu(\vec s)).
\end{equation}
While it is always possible to take $t_j(\vec s) = s_j$ and $\mathcal{G} = \Psi$, the representation Eq.~\eqref{eq:FequalG} will be useful only when $\mu$ is less than $n$. As an example of when this holds, consider if $\Psi$ were purely linear. Then $\Psi(\vec s)$ must be of the form $\vec w^T\vec s + b$, and $\mathcal{G}$ exists with $\mu=1$. \par 

More generally, when the network $\Psi$ consists of $k$ nonlinearities, $\mathcal{G}$ can be found with $\mu \leq k+1$ affine features. We call this \textit{feature reduction} and provide a full constructive proof in the Supplemental Material~\cite{supp}. The basic reasoning is as follows. A directed acyclic graph (DAG) has a node ordering such that if a directed edge $(u,v)$ connects $u$ and $v$, $u$ precedes $v$ in the ordering. We may proceed inductively on such an ordering of the DAG of $\Psi$. The first node must be a function of a single affine feature, and each subsequent node is a function of the preceding affine features and possibly one more. Including an optional overall affine feature then gives $\mu \leq k+1$. \par 

Intuitively, this means that a feed-forward network with $k$ nonlinearities
cannot couple the $n$ input spins through more than $k+1$ independent
affine features (or collective variables). Thus if $k$ is small, the wavefunction amplitudes are functions of a
small number of affine features rather than all $n$ degrees of freedom. This will be the key to bounding entanglement.\par

Next, we must make some well-behavedness assumptions about the NQS. Different choices can be adopted here, but the goal is to make a reasonable assumption prohibiting the NQS from ``hiding" a large amount of complexity in a few nonlinearities. We will choose to require the following technical condition: that $\mathcal{G}(t_1,\ldots, t_\mu)$ is analytic in a complex neighborhood of each of its real arguments, and bounded on this neighborhood, in a manner independent of $n$. In particular, we assume the domains of the $t_i$ are bounded independently of $n$, which sets the scaling of weights to be $\|\vec w\|_1 = O(1)$. We address relaxations of this assumption in the Supplemental Material~\cite{supp}. With this assumption, we may use standard results in approximation theory~\cite{trefethen,supp} that there exists a multivariable polynomial of degree at most $d$ in each variable that approximates $\mathcal{G}$ uniformly on its domain $\mathcal{D}\subset \mathbb{R}^\mu$ to accuracy $O(e^{-\gamma d})$, where $\gamma>0$ is a constant. 
\par 
In fact, our condition can be phrased as a quite natural condition on the nonlinearities: that they are analytic and bounded on a neighborhood of their domains, in a manner independent of $n$. This prohibits any single nonlinearity from hiding a complexity which grows with $n$. Hence our condition encompasses many standard activations used in practice, including
$\tanh$, softplus, and polynomials~\cite{nair2010rectified}.

\paragraph*{Entanglement bound.} At this stage, we have established that a feed-forward NQS with $k$ nonlinearities can be written in the form Eq.~\eqref{eq:FequalG} with $\mu \leq k+1$ and, with a suitable analyticity assumption on $\mathcal{G}$, for each $d$ there exists a polynomial $P_d$ of degree no more than $d$ in each variable such that
\begin{equation}\label{eq:uniform}
    \max_{\vec x \in \mathcal{D}} |\mathcal{G}(\vec x) - P_d(\vec x)| \leq \beta e^{-\gamma d}.
\end{equation}
Here, $\vec x= (x_1,\ldots, x_\mu)$, $\mathcal{D}$ is the domain of $\mathcal{G}$, and $\beta,\gamma$ are $O(1)$ constants. We are now in a position to state our main result. \par 
\textbf{Theorem. }\textit{(Entanglement bound)} Let $\Psi(\vec s)$ be a feed-forward NQS on $n$ spins with $k$ scalar nonlinearities. We have that $\Psi(\vec s) = \mathcal{G}(t_1(\vec s),\ldots, t_\mu(\vec s))$ where $t_i(\vec s)$ are affine features and $\mu \leq k+1$. Suppose $\mathcal{G}$ extends analytically and remains bounded in a complex neighborhood of each of its real arguments, in a manner independent of $n$. Let $A$ be a subset of the $n$ spins 
and let $S_A$ be the von Neumann entropy of the reduced density matrix 
$\rho_A = \Tr_{\bar A} \ket\Psi\bra\Psi$. Then the entropy grows at most linearly 
with $\mu$ and at most logarithmically in $n$:
\begin{equation}\label{eq:SAklogn}
    S_A = O(\mu \log n).
\end{equation}
In particular, no feed-forward NQS with $k=O(1)$ nonlinearities can exhibit volume law entanglement. We provide the formal theorem and proof in the Supplemental Material~\cite{supp}. In the following we outline the proof strategy. 
\par 

To proceed, we first introduce the auxiliary state 
\begin{equation}
    \ket{\tilde \Psi} = \mathcal{N}^{-1/2} \sum_{\vec s} P_d(t_1(\vec s),\ldots, t_\mu(\vec s)) \ket{\vec s}
\end{equation}
where $P_d$ satisfies Eq.~\eqref{eq:uniform} and $\mathcal{N}$ ensures unit normalization. Letting $\tilde \rho_A = \Tr_{\bar A} \ket{\tilde \Psi}\bra{\tilde \Psi}$ and $\tilde S_A = -\Tr \tilde \rho_A \log \tilde \rho_A$, we will first seek a bound on $\tilde S_A$.\par 

Let $\mathcal{H}_{A}, \mathcal{H}_{\bar A}$ denote the Hilbert spaces of $A$ and its complement $\bar A$, respectively, and let $\ket{\vec u}_A,\ket{\vec v}_{\bar A}$ denote corresponding basis states. We define a $\dim \mathcal{H}_A \times \dim \mathcal{H}_{\bar A}$ matrix $\widetilde M$ by 
\begin{equation}\label{eq:auxiliary}
    \ket{\tilde \Psi} := \mathcal{N}^{-1/2}\sum_{\vec u,\vec v} \widetilde M_{\vec u\vec v} \ket{\vec u}_A \ket{\vec v}_{\bar A}.
\end{equation}
We have that $\tilde S_A \leq \log \tilde r$, where $\tilde r = \rank \widetilde M$ is the Schmidt rank of $\ket{\tilde \Psi}$. Let us write
\begin{equation}\label{eq:Pd}
    P_d(\vec x) = \sum_{0\leq n_1,\ldots, n_\mu \leq d} \alpha_{\vec n} x_1^{n_1} \cdots x_\mu^{n_\mu}
\end{equation}
where $\alpha_{\vec n}$ are scalar coefficients. Furthermore, let us introduce \textit{split affine features} $x_i(\vec u)$ and $y_i(\vec v)$, which are affine features of $\vec u$ and $\vec v$ respectively such that $x_i(\vec u) + y_i(\vec v) = t_i(\vec s)$ for each $i$. An explicit expression for $\widetilde M$ using Eq.~\eqref{eq:Pd} and binomial expansion is then
\begin{multline}
    \widetilde M =  \sum_{0\leq n_1,\ldots, n_\mu \leq d}\sum_{k_1,\ldots, k_\mu=0}^{n_1,\ldots,n_\mu} \alpha_{\vec n}\times\\
    \left(\sum_{\vec u}\prod_{i=1}^\mu \binom{n_i}{k_i} [x_{i}(\vec u)]^{n_i-k_i} \ket{\vec u}\right)\left(\sum_{\vec v}\prod_{j=1}^\mu [y_{j}(\vec v)]^{k_{j}}\bra{\vec v}\right).
\end{multline}
Each summand is an outer product of rank one. Hence by subadditivity of rank, the rank of $\widetilde M$ is bounded by $\sum_{0\leq n_1,\ldots, n_\mu \leq d}\sum_{k_1,\ldots, k_\mu=0}^{n_1,\ldots,n_\mu} 1$. We conclude that
\begin{equation}\label{eq:rtilde}
    \tilde r \leq ((d+1)(d+2)/2)^\mu, 
\end{equation}
and, using $\tilde S_A \leq \log \tilde r$, we have $\tilde S_A = O(\mu \log d)$. In words, the auxiliary state's Schmidt decomposition produces only polynomially many terms, so the Schmidt rank is at most poly($d$) and the entropy at most logarithmic in $d$.\par 

Next, we convert this to a bound on $S_A$ using the approximation Eq.~\eqref{eq:uniform}. This requires a few steps. First, if we choose a sufficiently large degree $d= \Theta(n)$ for the polynomial $P_d$, we can ensure that $\|\ket\Psi - \ket{\tilde \Psi}\|_2 = O(e^{-\kappa n})$ for some $\kappa>0$. Next, with standard arguments we may bound the trace distance of reduced density matrices, $\|\rho_A - \tilde \rho_A\|_1 = O(e^{-\kappa n})$. Finally, we may apply the Fannes-Audenaert inequality 
\begin{equation}\label{eq:FAineq}
 |S_A - \tilde S_A| \leq T \log(2^{|A|}-1) +H_2(T)
\end{equation}
where $T = \frac12 \|\rho_A-\tilde \rho_A\|_1$ and $H_2(x) = -x\log x -(1-x)\log(1-x) < 1$ to conclude $|S_A - \tilde S_A| = O(1)$. Hence the entanglement bound on the auxiliary state transfers to $\ket\Psi$, and Eq.~\eqref{eq:SAklogn} follows due to the choice $d=\Theta(n)$. \par 

Theorem~\eqref{eq:SAklogn} as stated does not cover nonanalytic activations such as $\relu(x)=\max(0,x)$; nevertheless, our numerical experiments suggest the same scaling behavior, and smooth approximations to ReLU (e.g. GeLU) fall within our assumptions. We also note that a recurrent NQS evaluated for a finite horizon $T$ can be unrolled into a DAG with an effective number of scalar nonlinearities $k_{\rm eff}$. If the unrolled DAG satisfies our analyticity assumptions, then the bound applies with $k$ replaced by $k_{\rm eff}$~\cite{supp}.

\begin{figure}
    \centering
\includegraphics[width=\linewidth]{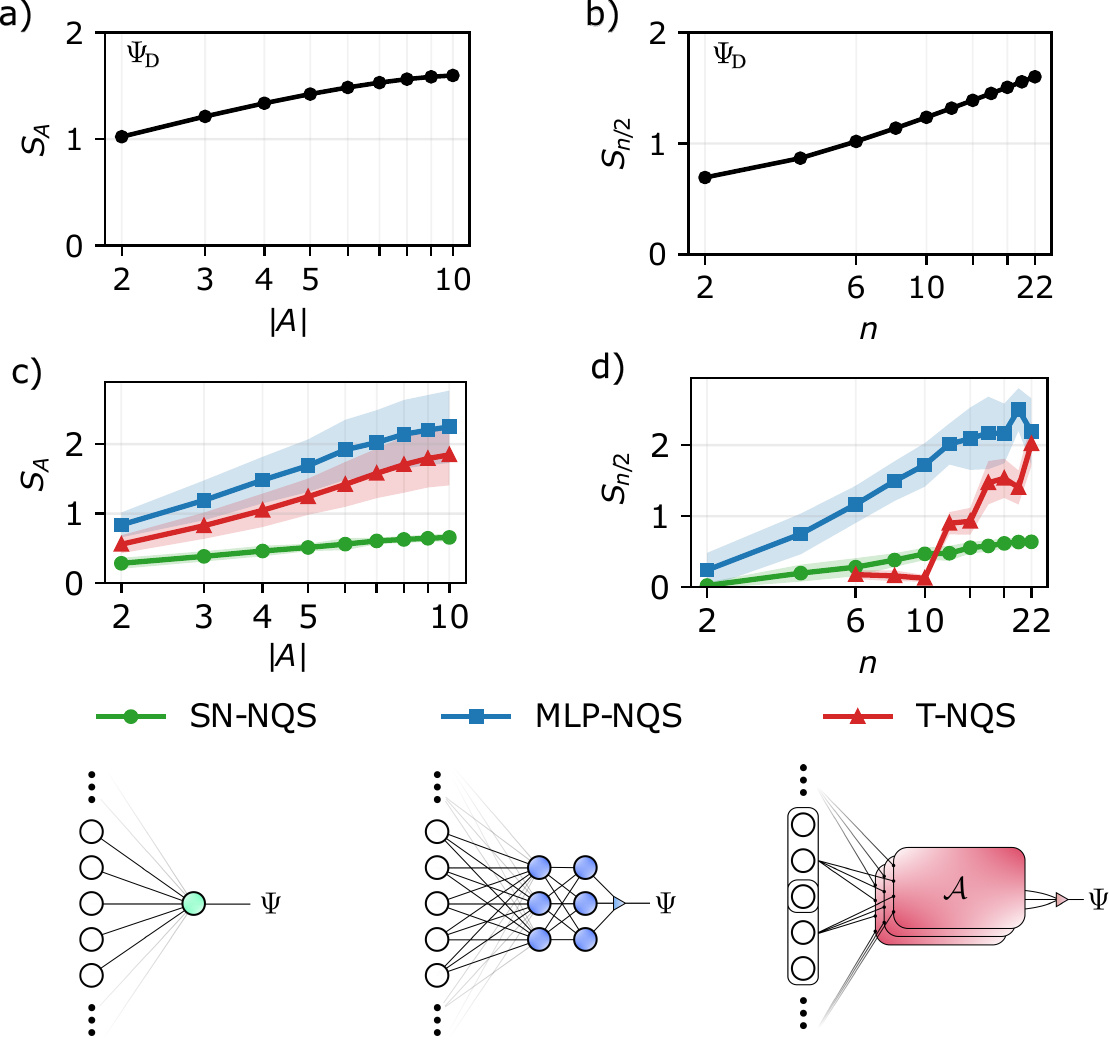}
    \caption{(a) Subregion von Neumann entanglement entropy $S_A$ for $n=22$ and (b) bipartite von Neumann entanglement entropy $S_{n/2}$ for the Dicke state, Eq.~\eqref{eq:Dicke}, on a logarithmic $x$-axis, corroborating analytically predicted scaling, respectively, up to finite-size effects. (c,d) Subregion and bipartite entanglement entropy for single-nonlinearity, multilayer perceptron, and transformer neural quantum states (SN-NQS, MLP-NQS, T-NQS) on a logarithmic $x$-axis with random parameters and shading indicating one standard deviation. The MLP-NQS has width $3$ and depth $2$. For the T-NQS, we chose patches of size $6$, stride $5$, 4 attention heads, and 2 layers. Each point was averaged over 20 trial wavefunctions.}
    \label{fig:numerics}
\end{figure}
Let us now establish, by explicit example, that the scaling of the bound with $n$ is tight. We choose to illustrate this using what we refer to as the Dicke state~\cite{dicke1954coherence,stockton2003characterizing}, an equal-weight superposition of states with vanishing total spin. The Dicke state can be expressed as a NQS with a single nonlinearity: 
\begin{equation}\label{eq:Dicke}
    \ket{\Psi_{\mathrm{D}}} := \sum_{\vec s }\delta_{\Sigma s_i,0} \ket{\vec s} =  \sum_{\vec s} \sigma\left(\vec w^T\vec s\right)\ket{\vec s}
\end{equation}
where $\vec w = (1,\ldots, 1)$ and $\sigma(x)$ is any nonlinearity which evaluates to $\delta_{x0}$ on integers $x$ (here we'll assume $s_i=\pm 1$). Let $A$ be an extensive subregion, $|A| \sim O(n)$. Then, for large $n$, $S_A = \frac12 \log n  + O(1)$~\cite{supp}. This shows the asymptotic scaling of our bound is optimal. Moreover, this shows logarithmic entanglement scaling can be achieved by NQS with even a single nonlinearity. We plot finite-size numerical results for this state in Fig.~\ref{fig:numerics}.\par 

\paragraph*{Numerical support.} Next, we show numerically that the scaling of our bound with $n$ appears to be saturated by a wide variety of feed-forward NQS, within accessible sizes. For this purpose, we choose three architectures. First, a single-nonlinearity neural quantum state (SN-NQS), whose log-amplitudes, up to normalization, are given by $\sigma(\vec w^T \vec s + b)$
where $\sigma = \sigma_1 + i\sigma_2$ and $\sigma_{1,2}$ are real nonlinearities. Second, a multilayer perceptron neural quantum state (MLP-NQS), which processes inputs $\vec s$ through a vanilla neural network of width $w_0$ and depth $d_0$ before linear projections to the real and imaginary parts of the log-amplitude. And third, a transformer neural quantum state (T-NQS), which splits $\vec s$ into $O(1)$ patches, processes them using a sparse self-attention mechanism~\cite{vaswani2017attention,viteritti2023transformer}, concatenates the resulting vectors, and linearly projects onto the real and imaginary parts of the log-amplitude~\footnote{Note the transformer architecture includes vector nonlinearities. These can be decomposed into combinations of scalar nonlinearities and linear operations. The transformer architecture with $O(1)$ patches, heads, and layers will thus still have $O(1)$ scalar nonlinearities and belong to our framework.}. These are shown in Fig.~\ref{fig:numerics}. \par

For each of these architectures, we calculate the entanglement entropy of a subset $A$ of $n$ qubits. In each case, the total number of nonlinearities $k$ remains $O(1)$. We treat free parameters as independent random variables drawn from normal distributions and choose $\tanh$ nonlinearities (see \cite{supp} for further details). In Fig.~\ref{fig:numerics}c, where we fix $n$ and vary $|A|$, we find that entanglement in each ensemble of states grows logarithmically. This saturates our bound, which constrains $S=O(\log n)$ for $O(1)$ nonlinearities. In Fig.~\ref{fig:numerics}d, we fix $|A|=n/2$ and vary $n$, finding scaling in $n$ clearly consistent with our bound for SN-NQS and MLP-NQS. \par 

Our main results indicate that volume law entanglement cannot be achieved by an NQS on $n$ spins with $O(1)$ nonlinearities. Does this align with known examples of volume law NQS? A simple example of a neural quantum state reaching volume law entanglement is the cosine network (CosNet)~\cite{luo2023infinite,halverson2021building}, a type of MLP-NQS, defined by $\Psi = \Psi_1 + i\Psi_2$ and each of $\Psi_{1,2}$ of the form
\begin{equation}
    f(\vec s) = \sum_{i=1}^k a_i \cos(\vec w_i^T \cdot \vec s + b_i).
\end{equation}
It has coefficients $a_i \sim \mathcal{N}(0,\sigma_a^2/k)$, weights $w_i \sim \mathcal{N}(0,\sigma_w^2/n)$, and biases $b_i \sim \mathcal{U}[-\pi,\pi]$. We plot the entanglement entropy for CosNet in Fig.~\ref{fig:cosnet}, showing that it is consistent with our expectations: the entanglement remains sub-volume-law until $k$ becomes comparable to $n$. In this case, the entanglement entropy at fixed $|A|$ also displays logarithmic growth as a function of $k$. This scaling in $k$ is again consistent with our upper bound, though it does not saturate it. 
\par 
\begin{figure}
    \centering
    \includegraphics[width=\linewidth]{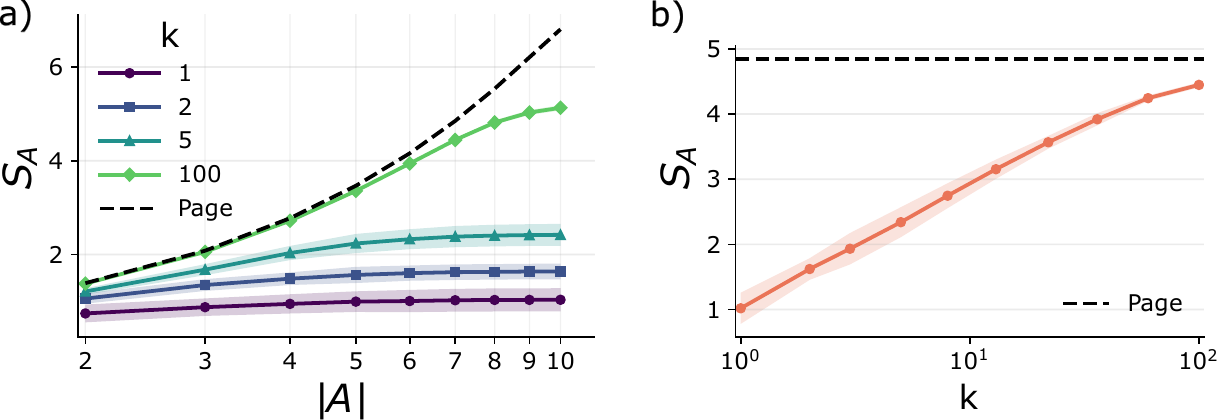}
    \caption{(a) Subregion entanglement entropy for CosNet states with $n=22$ qubits, averaged over 20 random initializations. The dashed black line is the Page value, \textit{i.e.} the Haar-random state prediction. Consistent with (sub)logarithmic entanglement scaling for $k\ll n$ and appears volume law for $k\gg n$. (b) $S_{|A|=7}$ for varying $k$. Plots have logarithmic $x$-axes, shading indicates one standard deviation, and $(\sigma_a,\sigma_w)=(10,1)$. }  \label{fig:cosnet}
\end{figure}
\par 
So far, we have not imposed a geometry on the $n$-spin system. Let us now consider spins on a $d$-dimensional lattice of linear size $L$, with $n=L^d$. In this case, we may apply our bound to area law entangled states, which satisfy $S = O(L^{d-1})$ for extensive subregions. In particular, area law entangled states require NQS with number of nonlinearities $k$ scaling at least as $O(L^{d-1}/\log L)$ under our assumptions. Therefore in dimensions $d>1$ even area law entangled states require a number of nonlinearities growing with system size for accurate NQS representation. This includes, for instance, gapped ground states of local Hamiltonians, which in one dimension are rigorously known and in higher dimensions are widely believed to obey an entanglement area law.

\par
Finally, the framework we've presented can be adapted to rigorously bound entanglement in a variety of specific settings. As a sharp example, consider a MLP-NQS of arbitrary width $w_0$ and depth $d_0$ with polynomial nonlinearities of degree at most $h$. Following the arguments of Eq.~\eqref{eq:auxiliary}-\eqref{eq:rtilde}, we can conclude that the entanglement entropy of any subregion is bounded by 
\begin{equation}
    S \leq w_0 \log\left[(h^{d_0}+1)(h^{d_0}+2)/2\right]\approx 2w_0 d_0\log h,
\end{equation}
where the approximation is valid for large $h^{d_0}$. This implies that either large width or large depth is strictly necessary for MLP-NQS with polynomial nonlinearities to achieve volume law entanglement.

\paragraph{Outlook. } In this work, we have provided an upper bound on entanglement in feed-forward neural quantum states with $k$ elementary scalar nonlinear operations acting on $n$ spins or lattice fermion degrees of freedom, namely $S = O(k\log n)$ as $n$ tends to infinity. Analytical and numerical examples confirm that this upper bound is saturated for many generic types of NQS. While our main contribution is an upper bound on entanglement, our work also affirms the substantial expressive power of NQS: logarithmic entanglement can generally be achieved even with very few nonlinearities. In contrast, logarithmic entanglement scaling in MPS requires a bond dimension growing polynomially with system size. 

\par 

Our theorem applies to feed-forward (directed acyclic) NQS subject to analyticity assumptions and in which the individual scalar nonlinearities are independent of $n$.
It therefore clarifies when volume-law entanglement is and is not compatible with NQS. In particular, volume-law entanglement is permitted by our theorem through several mechanisms:
\textit{(i)} scaling the number of scalar nonlinearities as $k=\Omega(n/\log n)$, as seen in many constructions where the number of neurons grows with $n$~\cite{luo2023infinite,deng2017quantum};
\textit{(ii)} employing recurrent or autoregressive architectures, when unrolled for a large enough number of steps;
\textit{(iii)} allowing explicit $n$-dependence in nonlinearities, though this departs from standard machine learning practice; and
\textit{(iv)} incorporating global non-scalar operations such as Slater determinants, which result in an effective $k\sim\mathrm{poly}(n)$.
As a concrete example, Ref.~\cite{PhysRevLett.131.036502} learned the volume-law ground state of the quantum
Sherrington-Kirkpatrick model using an MLP-NQS with $k\sim n$ neurons. This is consistent with our upper bound, and from our perspective, the choice $k\sim n = \Omega(n/\log n)$ was necessary.

\par 
Rigorous characterizations of NQS expressivity have appeared in restricted settings~\cite{deng2017quantum,sharir2022neural,gao2017efficient,sun2022entanglement,luo2023infinite,yang2024can}, alongside numerous empirical studies of expressivity~\cite{szabo2020neural,park2022expressive,PhysRevLett.134.079701,PhysRevLett.131.036502,westerhout2020generalization}. Our result complements these works by providing a largely architecture-agnostic upper bound on entanglement capacity, an aspect of expressivity, and capturing how the number of nonlinearities acts as a fundamental bottleneck.
\par 
Our work suggest several interesting directions for the future. For example, we expect our analyticity assumptions can be relaxed due to similar numerical results on ReLU networks~\cite{supp}. In the future, it will be interesting to find upper bounds on expressivity of NQS in terms of other quantities besides entanglement. Moreover, our work raises the possibility that NQS with only $O(1)$ nonlinearities and correspondingly small computational scaling could efficiently capture quantum critical states or quantum dynamics to times $t\sim\log n$~\cite{supp}.
\par 

\begin{acknowledgments}
\textit{Acknowledgments.} We thank Liang Fu, Filippo Gaggioli, Max Geier, and Di Luo for helpful discussions. We thank Di Luo for sharing details of a related work. We acknowledge support
from the Walter Burke Institute for Theoretical Physics at Caltech. 
\end{acknowledgments}

\bibliography{ref}

\end{document}


\title{Supplemental Material: Bound on entanglement in neural quantum states}

\author{Nisarga Paul}
\affiliation{Department of Physics, Massachusetts Institute of Technology, Cambridge, Massachusetts 02139, USA}
\affiliation{Department of Physics and Institute for Quantum Information and Matter,
California Institute of Technology, Pasadena, California 91125, USA}

\maketitle
\onecolumngrid
\tableofcontents

\section{Overview}
An overview of this Supplemental Material is as follows.

\par 
\vspace{0.5cm}

\noindent \textbf{Numerical experiments. } In \textsection \ref{sec:numerics}, we study the entanglement entropy of various neural quantum state (NQS) ans\"atze for subregions $A$ of $n$ spins. We divide our study into three categories: single-nonlinearity NQS (SN-NQS), multilayer perceptron NQS (MLP-NQS), and transformer NQS (T-NQS). The SN-NQS provides perhaps the simplest setting to illustrate our main claims: that entropy is bounded by $O(k\log n)$ and, in particular, volume law entanglement cannot be reached by a family of NQS given a bounded $k\sim O(1)$ number of elementary scalar nonlinearities. Numerical results show that logarithm law entanglement is generic for SN-NQS, consistent with this claim. Moreover, we provide an analytical proof that logarithm law entanglement is attained
for a simple example of SN-NQS, the Dicke state (Theorem \ref{thm:hh}). Numerical results also show that the MLP-NQS and T-NQS are consistent with logarithm law entanglement for a small number of nonlinearities. Finally, we provide details on the numerical figures presented in the main text.

\par 
\noindent \textbf{Proof of entanglement bound for NQS with a single nonlinearity. } In \textsection \ref{sec:proofsingle}, as a warmup to the general case, we prove that SN-NQS with certain analyticity assumptions satisfy an entanglement bound which prohibits volume law entanglement (Theorem \ref{thm:1nqs}). This introduces all the main elements of our general proof in a simpler setting.

\par 
\noindent 
\textbf{Feature reduction lemma. } In \textsection\ref{sec:frl}, we prove that the amplitudes of a feed-forward NQS acting on $n$ spins with $k$ nonlinearities is a function of at most $k+1$ affine features of the spins (Lemma \ref{lem:featurereduction}), after formalizing the computational graph of the NQS as a directed acyclic graph. This also provides a construction for the function $\mathcal{G}$ introduced in the main text. \par

\noindent 
\textbf{Proof of entanglement bound for NQS with $k$ nonlinearities. } In \textsection\ref{sec:proofO1}, we use the feature reduction lemma and provide the general proof, under certain analyticity assumptions, of our main result $S = O(\mu\log n)$ with $\mu\leq k+1$ (Theorem \ref{thm:O1nqs}). \par 

\noindent 
\textbf{Neural quantum state variational Monte Carlo. } In \textsection \ref{sec:nqsvmc}, we review how NQS is paired with variational Monte Carlo to solve ground state properties of quantum systems. In this context, we highlight how NQS ans\"atz with a bounded number $k\sim O(1)$ of nonlinearities can in principle be optimized more efficiently than general NQS by a factor poly($n$). \par 

\noindent 
\textbf{Other lemmas. } We leave several technical lemmas to \textsection\ref{sec:lems}, including the results from classical approximation theory on which our bound relies. \par 

As a technical remark, by an NQS we will mostly mean a \textit{family} of NQS labelled by $n$, where $n$ is the number of input spins. Asymptotic notation is then with respect to scaling in $n$. 

\section{Numerical experiments}\label{sec:numerics}

\subsection{Entanglement in single-nonlinearity neural quantum states}\label{sec:logsingle}

This section is devoted to a study of neural quantum states with a single, global nonlinearity, as a simple limit of the general case. In particular we consider states of $n$ qubits of the form 
\begin{subequations}\label{eq:PsiSingleNonlearity}
\begin{align}
    \ket{\Psi}&= \sum_{\vec s} \Psi(\vec s)\ket{\vec s}\quad \in \mathbb{C}^{2^n}\\
    \Psi(\vec s) &= \mathcal{N}^{-\frac12}\exp(\sigma(\vec w^T\vec s+b))\qquad \text{or} \qquad \Psi(\vec s) = \mathcal{N}^{-\frac12}\sigma(\vec w^T\vec s+b)
\end{align}
\end{subequations}
where $\vec s = (s_1,\ldots, s_n)$, $\ket{\vec s}$ is a $Z$-basis product state, $\vec w \in \mathbb{R}^n$, $b \in \mathbb{R}$, and $\sigma:\mathbb{R}\to \mathbb{C}$ is the nonlinearity. $\mathcal{N}^{-1/2}$ is the normalization. We illustrate this schematically in Fig.~\ref{fig:dicke}a. We refer to these as single-nonlinearity neural quantum states (SN-NQS). The two choices are technically equivalent under $\sigma \to \exp(\sigma(\cdot))$, but we keep these two parameterizations separate for convenience. Moreover, our results do not change appreciably if we had instead chosen $\Psi = \Psi_1+i\Psi_2$ as our parameterization. For the first case, when $\sigma$ is linear, $\Psi$ is a product state. To see this, let $\sigma(x) = a_0x+a_1$ and note that 
\begin{equation}
    \ket{\Psi} \propto \otimes_{i=1}^n \sum_{s_i=\pm 1} e^{a_0 w_is_i}\ket{s_i}.
\end{equation}
This shows that true nonlinearity of $\sigma$ is necessary for nontrivial entanglement to be present in this case. \par 

First, we will prove that there exists a SN-NQS such that $\ket\Psi$ has logarithmic entanglement. We refer to this state as the \textit{Dicke state}, in accordance with the literature~\cite{bartschi2019deterministic}. Next, we will explore the entanglement properties of SN-NQS for various choices of $\vec w,b,\sigma$ and provide evidence that logarithmic entanglement is generic. In Sec.~\ref{sec:proofsingle}, we will prove a tight log law entanglement scaling upper bound for this class of states. \par 

\begin{theorem}\label{thm:hh}
    (Dicke state) Take $\ket{\Psi}$ of the form Eq.~\eqref{eq:PsiSingleNonlearity} with $\Psi(\vec s) = \mathcal{N}^{-\frac12}\sigma(\vec w^T\vec s+b)$, $w_i =1, b=0,$ and $\sigma(x) = \delta_{x,0}$, and $n$ even. In practice, $\sigma(x)$ does not need to be discontinuous; $\sigma(x)$ only needs to agree with $\delta_{x,0}$ on the integers. Then $\ket{\Psi}$ has logarithmic entanglement scaling. In particular, its von Neumann entanglement entropy on any subset $A\subset \{1,\ldots, n\}$ satisfies $S_A = \frac12 \log |A| + O(1)$. 
\end{theorem}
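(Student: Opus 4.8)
The plan is to recognize $\ket\Psi$ as the standard Dicke state, compute its Schmidt spectrum across an arbitrary bipartition in closed form, and then estimate the resulting classical Shannon entropy with a local central limit theorem.

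Since $w_i=1$ and $b=0$, the affine feature $\vec w^T\vec s=\sum_{i=1}^n s_i$ vanishes exactly when precisely $n/2$ spins point up; hence $\ket\Psi$ is the normalized uniform superposition of all Hamming-weight-$n/2$ computational basis states, the Dicke state $\ket{D^n_{n/2}}$, with $\mathcal N=\binom{n}{n/2}$. Fix $A$ with $|A|=n_A$ and complement of size $n_B=n-n_A$; using $S_A=S_B$ I may assume $n_A\le n/2$. Grouping the basis states by the number $j$ of up-spins located in $A$ (so $B$ carries $n/2-j$ up-spins), and using that weight-$j$ states on $n_A$ qubits are mutually orthonormal (likewise on $B$), this grouping is already a Schmidt decomposition, with Schmidt probabilities
\begin{equation}
  \lambda_j=\frac{\binom{n_A}{j}\binom{n_B}{n/2-j}}{\binom{n}{n/2}},\qquad j=0,1,\dots,n_A,
\end{equation}
which sum to $1$ by Vandermonde's identity. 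This is precisely the hypergeometric distribution of the number of up-spins among $n_A$ sites drawn without replacement from $n$ sites of which $n/2$ are up.

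It then remains to show $S_A=-\sum_{j}\lambda_j\log\lambda_j=\tfrac12\log|A|+O(1)$. This distribution has mean $n_A/2$ and variance $\sigma^2=\tfrac{n_A n_B}{4(n-1)}$, and since $n/2\le n_B\le n-1$ one has $\tfrac{n_A}{8}\le\sigma^2\le\tfrac{n_A}{4}$, i.e. $\sigma^2=\Theta(|A|)$ with explicit constants. I would apply Stirling's formula to the three binomial coefficients to get, for $j=\tfrac{n_A}{2}+x$ in the bulk $|x|\le n_A^{1/2+\epsilon}$, the uniform Gaussian local limit $\lambda_j=\tfrac{1}{\sqrt{2\pi\sigma^2}}\,e^{-x^2/2\sigma^2}(1+o(1))$ (the $n/(n-1)$ discrepancy between $\sigma^2$ and the Stirling variance is $O(1/n)$ and negligible). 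Substituting this into the entropy sum and passing to the corresponding Gaussian integral gives the bulk contribution $\tfrac12\log(2\pi e\,\sigma^2)+o(1)$, which by $\sigma^2=\Theta(|A|)$ equals $\tfrac12\log|A|+O(1)$ with the constant independent of $n$. For the tails $|x|>n_A^{1/2+\epsilon}$ I would combine the Hoeffding bound for the hypergeometric distribution, $\Pr[\,|j-n_A/2|\ge t\,]\le 2e^{-2t^2/n_A}$, with the crude pointwise estimate $-\log\lambda_j\le\log\mathcal N\le n$, so the tail terms contribute $o(1)$ to the entropy.

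The main obstacle I anticipate is making the local limit estimate and the sum-to-integral replacement uniform in both $n_A$ and $n$ over the whole regime $1\le n_A\le n/2$ — in particular ensuring that no hidden dependence on the ratio $n_A/n$ leaks into the $o(1)$ and $O(1)$ terms. I expect this is handled by carrying explicit Stirling remainders (e.g. $|\log m!-\tfrac12\log(2\pi m)-m\log(m/e)|\le 1/(12m)$) through the estimate and by splitting the entropy sum at $|x|=n_A^{1/2+\epsilon}$ as above. If only the constant (not the sharp coefficient $\tfrac12$) were wanted, one could instead sandwich $S_A$ between the maximum entropy $\tfrac12\log\!\big(2\pi e(\sigma^2+\tfrac1{12})\big)$ of an integer-valued random variable of variance $\sigma^2$ and a coarse-graining lower bound, but the local limit theorem is what pins down the coefficient $\tfrac12$ cleanly.
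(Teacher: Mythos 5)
Your proposal follows essentially the same route as the paper: it identifies $\ket\Psi$ as the Dicke state, obtains the Schmidt spectrum across the cut as the hypergeometric distribution $\lambda_j=\binom{|A|}{j}\binom{n-|A|}{\,n/2-j\,}/\binom{n}{n/2}$, and extracts $S_A=\tfrac12\log|A|+O(1)$ from the entropy of that distribution. The only difference is the last asymptotic step, which you carry out directly on the hypergeometric via Stirling/local-CLT estimates with explicit tail control (arguably more careful and more uniform in $|A|$), whereas the paper passes to the binomial approximation at fixed $p=|A|/n$ and quotes the standard binomial-entropy formula; both arguments are valid in the regime where $\min(|A|,|\bar A|)$ is of order $|A|$, which is what the stated $\tfrac12\log|A|+O(1)$ form implicitly assumes.
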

\begin{proof}
Let $h(\vec s) =\sum_i \delta_{s_i,+1}$ be the Hamming weight of the string $\vec s$. Then the resulting normalized state is an equal-weight superposition over all states with half spins up, i.e. Hamming weight $n/2$: 

\begin{equation}
    \ket{\Psi} = \frac{1}{\sqrt{c_n}}\sum_{\substack{\vec s:\\h(\vec s) = n/2}} \ket{\vec s}
\end{equation}
where $c_n = \binom{n}{n/2}$. Let $A\subseteq \{1,\ldots, n\}$ be a subset of spins with $|A| = m$. Let $B$ be its complement with $|B| = n-m$. We decompose each bitstring as $\vec s = (\vec x, \vec y)$. The reduced density matrix on $A$ is 
\begin{equation}
    \rho_A(\vec x,\vec x') = \frac{\delta_{h(\vec x),h(\vec x')}}{c_n} \binom{n-m}{n/2 - h(\vec x)}.
\end{equation}
Hence $\rho_A$ is block-diagonal in $h(\vec x)$, and moreover each block is a rank$-1$ matrix of constants. The nonzero eigenvalues of $\rho_A$ can be read off as 
\begin{equation}
    \lambda_i = \frac{1}{c_n}\binom{n-m}{n/2 - i} \binom{m}{i} ,\qquad i = 1,\ldots, m.
\end{equation}
This is the hypergeometric distribution. For large $n$ and fixed $p \equiv m/n$, we obtain a binomial distribution, and the standard result for the entropy of the binomial distribution then gives $S_A =\frac12 \log(2\pi e \frac{n}{2}p(1-p)) + O(1/n) = \frac12 \log |A| + O(1)$. 
\end{proof}

We numerically calculate the von Neumann entanglement entropy for the Dicke state and plot the results in Fig.~\ref{fig:dicke}b. This affirms the logarithmic scaling and indicates the size of subleading corrections, which is useful for comparison with later finite-size numerics. We also note that for the Dicke state, the nonlinearity need only agree with $\delta_{x,0}$ on the integers. For example, it can be achieved by
\begin{equation}
    \sigma(x) = \relu(x-1) -2\relu(x) +\relu(x+1)
\end{equation}
in terms of $\relu(x) = \max(0,x)$, and therefore $\sigma$ need not be discontinuous. \par 

While the choices of $\vec w, b,$ and $\sigma$ for the Dicke state are convenient for proving Theorem \ref{thm:hh}, they are not necessary in general. In fact, SN-NQS \textit{generically} have at most logarithmic entanglement entropy scaling.

\begin{figure}[t]
    \centering
    \begin{minipage}[t]{0.42\linewidth}
        \centering
        \begin{tikzpicture}[x=1.0cm, y=0.6cm, >=stealth, node distance=1cm]
          \tikzstyle{neuron}=[circle,draw=black,minimum size=8pt,inner sep=0pt]
          \tikzstyle{input}=[neuron,fill=white!15]
          \tikzstyle{nonlin}=[neuron,fill=green!15]

          \node[input] (I1) at (0,-0.8) {};
          \node[input] (I2) at (0,-1.6) {};
          \node[input] (I3) at (0,-2.4) {};
          \node at (0,-2.9) {$\vdots$};
          \node[input] (I4) at (0,-3.8) {};
          \node[input] (I5) at (0,-4.6) {};
          \node[input] (I6) at (0,-5.4) {};
          \node at (0,-0.3) {$s_1,\ldots,s_n$};

          \node[nonlin] (SIG) at (1.8,-3.2) {$\sigma$};

          \node (PsiTxt) at (3.6,-3.2) {$\Psi(\vec s)$ or $\log \Psi(\vec s)$};

          \foreach \inp in {I1,I2,I3,I4,I5,I6}{
            \draw[-,thin] (\inp) -- (SIG);
          }
          \draw[-,thin] (SIG) -- (PsiTxt);
        \end{tikzpicture}
    \end{minipage}
    \begin{minipage}[t]{0.32\linewidth}
        \centering
        \includegraphics[width=\linewidth]{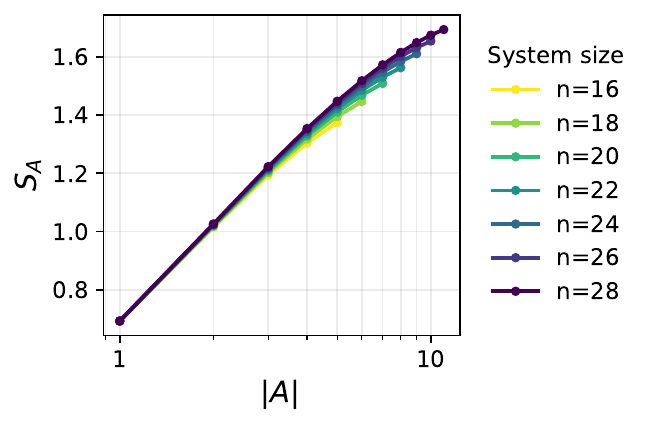}
    \end{minipage}
    \caption{(Left) Single-nonlinearity neural quantum state schematic. (Right) Von Neumann entanglement entropy $S_A$ for the Dicke state on a logarithmic $x$-axis. $n$ is the total number of qubits.}
    \label{fig:dicke}
\end{figure}

For now, to support this claim, we numerically explore the entanglement entropy for several random ensembles of states of the form Eq.~\eqref{eq:PsiSingleNonlearity} in Fig.~\ref{fig:nqs_single}. In all cases, we considered $n=22$, considered $20$ trial wavefunctions, and plotted the resulting mean and standard deviation. In the following, we write $\sigma(x) = \sigma_1(x)+i\sigma_2(x)$ where $\sigma_{1,2}$ are the real and imaginary parts, respectively. 
\par 
The first ensemble is the ``real" ensemble, where $\sigma_2 = 0$ and we choose among $\sigma_1\in \{\tanh, \sin, \relu, \mathrm{GELU}\}$. The GeLU (gaussian error linear unit) ~\cite{hendrycks2016gaussian} is a smooth approximation to $\relu$. These states have positive, real amplitudes. We choose the weights and bias from a normal distribution $\mathcal{N}(0,1)$. We plot the entanglement scaling in Fig.~\ref{fig:nqs_single}a. The bounded nonlinearities (tanh, sin) show nonzero entanglement consistent with a log law, while the unbounded nonlinearities (ReLU, GELU) show near-vanishing entanglement. This can be attributed to the following. The amplitudes for the former case are bounded between $[1/e,e]$ while the amplitudes for the latter case in principal have no upper bound, leading to typical configurations in which the amplitudes localize to one or a few basis states, with little or no entanglement. \par 

The second ensemble is the ``pure phase" ensemble, where $\sigma_1 = 0$ and amplitudes are pure phases $\exp(i\sigma_2(\cdots))$. Once again, we choose among $\sigma_2 \in \{\tanh,\sin,\relu, \mathrm{GELU}\}$ and weights and bias from $\mathcal{N}(0,1)$. We plot the entanglement scaling in Fig.~\ref{fig:nqs_single}b. In this case, all ensembles of states show nonzero entanglement consistent with a log law. This is consistent with the previous observation that states with amplitudes bounded from below and above have more entanglement. \par 

The third ensemble is the ``general" ensemble, for which nonlinearities were chosen to be $(1+i)\sigma(t)$ with $\sigma \in \{\tanh,\sin,\relu, \mathrm{GELU}\}$ and the weights and bias chosen from $\mathcal{N}(0,1)$ as before. We plot the entanglement scaling in Fig.~\ref{fig:nqs_single}c. This case bears the closest resemblance to the real case, especially with regards to the near-vanishing entanglement for the unbounded ReLU and GELU nonlinearities, which give rise to concentration of amplitude magnitudes on a few basis states. Due to the presence of complex phases, the entanglement achieved by the other nonlinearities is slightly boosted; for instance, $S_A$ reaches $\sim 1.25$ instead of $\sim 0.8$ for the $\sin$ nonlinearity. 
\begin{figure}
    \centering
    \includegraphics[width=0.32\linewidth]{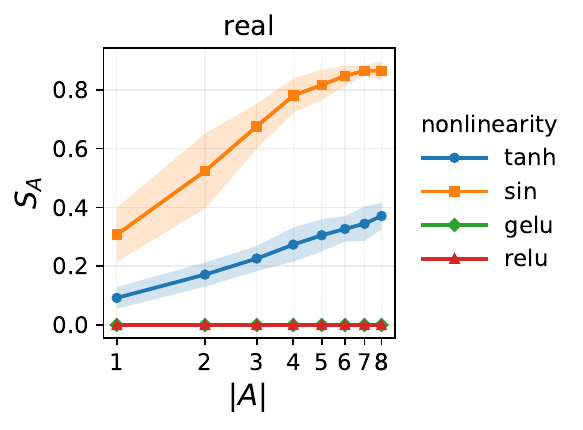}   \includegraphics[width=0.32\linewidth]{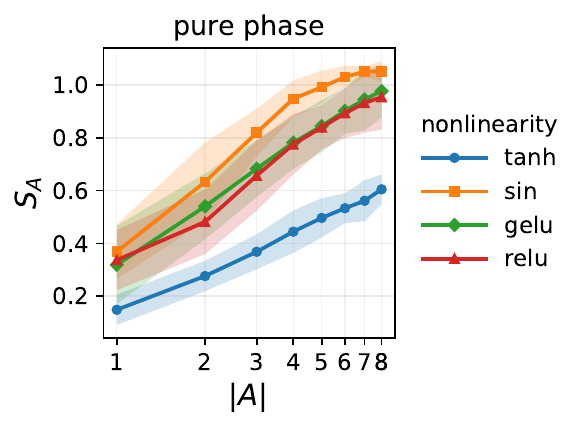}        \includegraphics[width=0.32\linewidth]{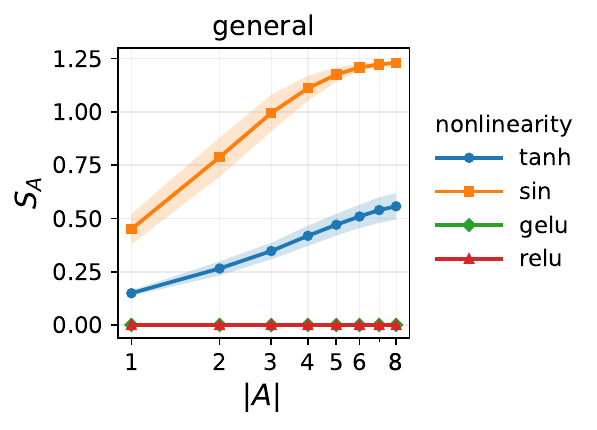}  
        \caption{Von Neumann entanglement entropy $S_A$ for three ensembles of single nonlinearity neural quantum states (SN-NQS) acting on $n=22$ qubits, plotted versus subsystem size $|A|$ (logarithmic $x$-axis). (Left) SN-NQS of the form Eq.~\eqref{eq:PsiSingleNonlearity} with $\sigma$ given by the indicated nonlinearity and $\vec w,b$ drawn from $\mathcal{N}(0,1)$, so the amplitudes are all real. (Center) The same, but with $\sigma$ given by $i$ times the indicated nonlinearity so the amplitudes are pure phase. (Right) The same, except $\sigma$ given by $1+i$ times the indicated nonlinearity so the amplitudes are general complex numbers. Shaded color indicates one standard deviation.}
    \label{fig:nqs_single}
\end{figure}

\subsection{Entanglement in multilayer perceptron neural quantum states}\label{sec:MLPNQS}

In this section we explore entanglement in multilayer perceptron (MLP) neural quantum states. Our MLP-NQS is built by feeding spin
configurations $\vec s \in \{\pm 1\}^n$ through a vanilla feed-forward
MLP of width $w$ and depth $d$. Formally, this is constructed as follows. For each layer 
$\ell=1,\ldots,d$, define
\begin{equation}
    \vec z^{(\ell)} = W^{(\ell)} \vec h^{(\ell-1)} + \vec b^{(\ell)}, 
    \qquad \vec h^{(\ell)} = \sigma\!\left(\vec z^{(\ell)}\right),
\end{equation}
where $W^{(\ell)} \in \mathbb{R}^{w \times d_{\ell-1}}$ is the weight matrix, 
$\vec b^{(\ell)} \in \mathbb{R}^w$ is the bias vector, and $\sigma$ is either 
$\tanh$ or $\sin$. We take $\vec h^{(0)} \equiv \vec s$, so that the input 
dimension is $d_0 = n$. The final feature vector is 
$\vec h(\vec s) \equiv \vec h^{(d)} \in \mathbb{R}^w$. The wavefunction is then 
obtained by a complex affine projection of this feature vector:
\begin{equation}
    \Psi(\vec s) = \left(\vec w_\mathrm{R}^T \vec h(\vec s) + b_\mathrm{R}\right) 
    + i\left(\vec w_\mathrm{I}^T \vec h(\vec s) + b_\mathrm{I}\right),
\end{equation}
with trainable parameters $(\vec w_\mathrm{R}, b_\mathrm{R})$ and 
$(\vec w_\mathrm{I}, b_\mathrm{I})$. An example is illustrated schematically in Fig.~\ref{fig:mlp-nqs}. Weights are initialized with independent Gaussian entries
\begin{equation}
    W^{(\ell)}_{ij} \sim \mathcal{N}\!\left(0,\sigma_w^2\right), 
    \qquad b^{(\ell)}_i \sim \mathcal{N}(0,\sigma_b^2).
\end{equation}
Unless otherwise specified, we take $\sigma_w\to \sigma_w/\sqrt{\mathrm{fan\mbox{-}in}}$, where fan-in denotes the number of input connections to the 
layer. This variance scaling preserves stable propagation of activations 
through the depth of the network~\cite{lecun2002efficient}. The same initialization scheme is applied to 
the final affine heads $(\vec w_\mathrm{R},b_\mathrm{R})$ and 
$(\vec w_\mathrm{I},b_\mathrm{I})$. A schematic is shown in Fig.~\ref{fig:mlp-nqs}. \par 

To assess entanglement properties, we construct the full normalized statevector 
$\Psi(\vec s)$ over all $2^n$ configurations and evaluate the von Neumann 
entropy $S_A$ of subsystems $A$ with $|A|=m$. For each subsystem size $m$ we 
consider $n_\mathrm{trans}=5$ randomly chosen contiguous windows 
$\{j,j+1,\ldots,j+m-1\}$ (with periodic boundary conditions) and average the 
resulting entropies. This limited translation averaging reduces computational 
cost while still probing typical entanglement behavior. We chose $\sigma_w =1.0, \sigma_b = 0.2$. For varying choices of $w,d$ we generate $20$ independent random trial 
wavefunctions and plot mean and standard deviation across the ensemble in Figs.~\ref{fig:MLP1}. In Fig.~\ref{fig:mlp-nqs2} we vary the width, holding other hyperparameters constant, and also vary $n$ for fixed $w=5,d=2$ and $|A|=n/2$. We chose to omit ReLU and other unbounded nonlinearities, which typically exhibited very low entanglement (near zero) in all cases we studied for the MLP-NQS; we ascribe this to amplitude blowups which localized the state around one or a few product basis states. 

\begin{figure}
    \centering
\begin{tikzpicture}[x=1.3cm, y=0.5cm, >=stealth, node distance=0.5cm]
  \tikzstyle{neuron}=[circle,draw=black,minimum size=8pt,inner sep=0pt]
  \tikzstyle{input}=[neuron,fill=white!15]
  \tikzstyle{hidden}=[neuron,fill=green!15]
  \tikzstyle{output}=[neuron,fill=red!15]

          \node[input] (I1) at (0,-1.0) {};
          \node[input] (I2) at (0,-2) {};
          \node[input] (I3) at (0,-3) {};
          \node at (0,-3.9) {$\vdots$};
          \node[input] (I4) at (0,-5) {};
          \node[input] (I5) at (0,-6) {};
          \node[input] (I6) at (0,-7) {};
          \node at (0,-0.3) {$s_1,\ldots,s_n$};

  \foreach \i in {1,...,4}{
    \node[hidden] (H1\i) at (1.5,-\i-1.4) {};
  }

  \foreach \i in {1,...,4}{
    \node[hidden] (H2\i) at (3.0,-\i-1.4) {};
  }

  \node[output] (R) at (4.7,-3.4) {$\Re$};
  \node[output] (I) at (4.7,-4.4) {$\Im$};
  \node (Psi) at (6.2,-3.9) {$\Psi(\vec s)$ or $\log \Psi(\vec s)$};

  \foreach \i in {1,...,6}{
    \foreach \j in {1,...,4}{
      \draw[-,thin] (I\i) -- (H1\j);
    }
  }
  \foreach \i in {1,...,4}{
    \foreach \j in {1,...,4}{
      \draw[-,thin] (H1\i) -- (H2\j);
    }
  }
  \foreach \i in {1,...,4}{
    \draw[-,thin] (H2\i) -- (R);
    \draw[-,thin] (H2\i) -- (I);
  }
  \draw[-,thin] (R) -- (Psi);
  \draw[-,thin] (I) -- (Psi);

\end{tikzpicture}
    \caption{Schematic of an multilayer perceptron neural quantum state (MLP-NQS) on $n$ spins. Here, the input spin configuration
$\vec s$ is processed by $d=2$ hidden layers of width $w=4$ with 
nonlinear activations. The final feature vector is mapped 
affinely to two outputs, giving the real and imaginary parts of the amplitude 
$\Psi(\vec s)$.}
    \label{fig:mlp-nqs}
\end{figure}
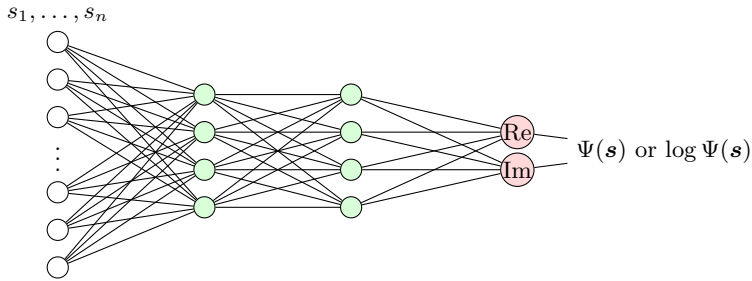

\begin{figure}
    \centering
    \includegraphics[width=0.32\linewidth]{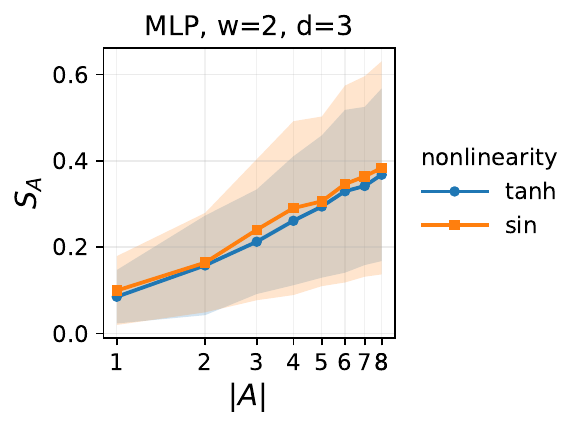}
    \includegraphics[width=0.32\linewidth]{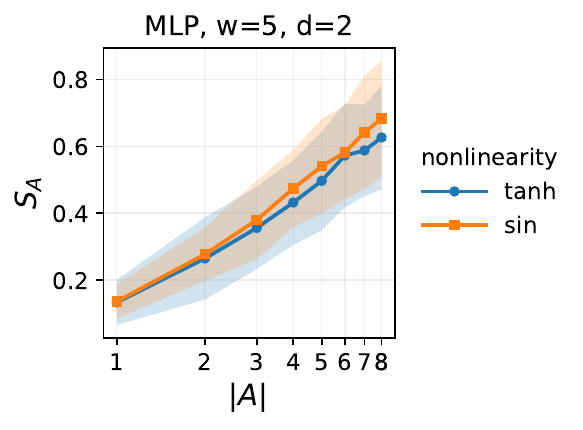}
    \includegraphics[width=0.32\linewidth]{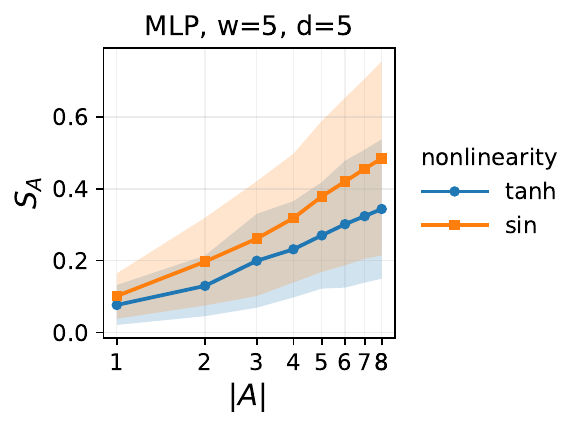}
    \caption{Von Neumann entanglement entropy for multilayer perceptron neural quantum states (MLP-NQS) with width $w$, depth $d$, and total number of nonlinearities $k=wd$, as described in Sec.~\ref{sec:MLPNQS}, acting on $n=22$ qubits, plotted versus subsystem size $|A|$ (logarithmic $x$-axis). (Left) $w=2,d=3$, (Center) $w=5,d=2$, (Right) $w=5,d=5$. In all cases $k \lesssim n$ and entanglement scaling is logarithmic, consistent with our bound. Shaded color indicates one standard deviation.}
    \label{fig:MLP1}
\end{figure}

\begin{figure}
\centering
\includegraphics[width=0.32\linewidth]{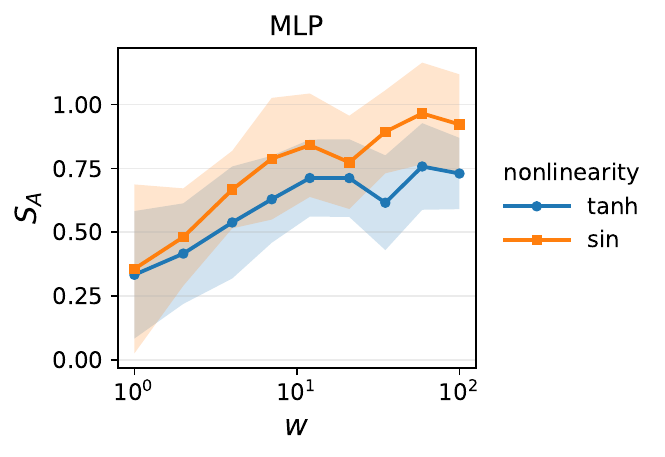}
\includegraphics[width=0.31\linewidth]{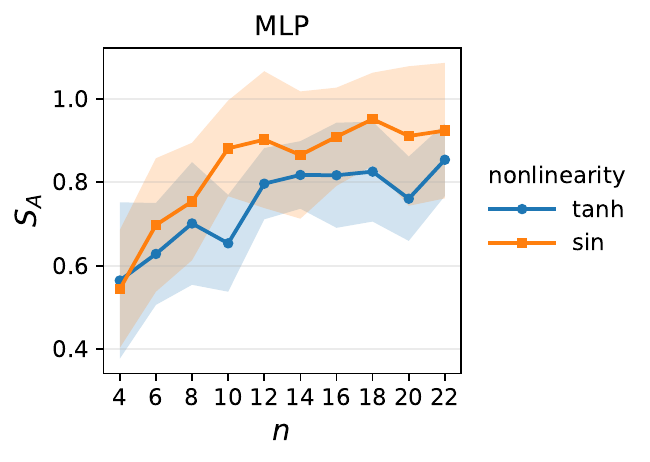}
\caption{(Left) Von Neumann entanglement entropy for MLP-NQS on $n=22$ spins for $d=2$ and varying width $w$ (logarithmic $x$-axis). This indicates entanglement increases sublinearly in the number of nonlinearities $k=wd$ for this ans\"atz, consistent with our bound. Shaded color indicates one standard deviation. (Right) Varying $n$ for fixed $|A|=n/2$, $w=5,d=2$. }
\label{fig:mlp-nqs2}
\end{figure}

\subsection{Entanglement in transformer neural quantum states}

\begin{figure}[t]
\centering
\begin{tikzpicture}[x=1.6cm, y=0.5cm, >=stealth, node distance=1cm]
  \tikzstyle{neuron}=[circle,draw=black,minimum size=8pt,inner sep=0pt]
  \tikzstyle{input}=[neuron,fill=white!15]
  \tikzstyle{block}=[draw=black,rounded corners,minimum width=22mm,minimum height=9mm,align=center,fill=green!12]
  \tikzstyle{blockghost}=[draw=black,rounded corners,minimum width=22mm,minimum height=9mm,align=center,fill=green!8]
  \tikzstyle{connector}=[draw=black,rounded corners,fill=green!8,minimum height=8mm,minimum width=8mm,align=center]
  \tikzstyle{output}=[neuron,fill=red!15]
  \node at (0,0.1) {$s_1,\ldots,s_n$};
  \node[input] (S1) at (0,-0.8) {};
  \node[input] (S2) at (0,-1.6) {};
  \node[input] (S3) at (0,-2.4) {};
  \node[input] (S4) at (0,-3.2) {};
  \node[input] (S5) at (0,-4.0) {};
  \node[input] (S6) at (0,-4.8) {};
  \node[input] (S7) at (0,-5.6) {};
  \node at (0,-6.4) {$\vdots$};
  \draw[rounded corners=3pt,thin] (-0.15,-0.4) rectangle (0.15,-2.8);  
  \draw[rounded corners=3pt,thin] (-0.15,-2.0) rectangle (0.15,-4.4);  
  \draw[rounded corners=3pt,thin] (-0.15,-3.6) rectangle (0.15,-6.0);  
  \node (P1) at (0.15,-1.6) {};
  \node (P2) at (0.15,-3.2) {};
  \node (P3) at (0.15,-4.8) {};
  \node[anchor=west] at (0.4,-1.6) {\small tokens};
  
  \node[blockghost, minimum width=20mm] (ATTN1) at (1.65,-3.40) {$\mathcal{A}$};
  \node[blockghost, minimum width=20mm] (ATTN2) at (1.75,-3.30) {$\mathcal{A}$};
  
  \node[block, minimum width=12mm] (FFN) at (3.00,-3.20) {FFN};
  
  \draw[-,thin] (P1) -- (ATTN1.west);
  \draw[-,thin] (P2) -- (ATTN1.west);
  \draw[-,thin] (P3) -- (ATTN1.west);
  \draw[-,thin] (P1) -- (ATTN2.west);
  \draw[-,thin] (P2) -- (ATTN2.west);
  \draw[-,thin] (P3) -- (ATTN2.west);
  
  \draw[-,thin] (ATTN1.east) -- (FFN.west);
  \draw[-,thin] (ATTN2.east) -- (FFN.west);
  
  \node[block, minimum width=20mm] (ATTN3) at (1.85,-3.20) {$\mathcal{A}$\\[-2pt]\footnotesize self-attention};
  
  \draw[-,thin] (P1) -- (ATTN3.west);
  \draw[-,thin] (P2) -- (ATTN3.west);
  \draw[-,thin] (P3) -- (ATTN3.west);
  
  \draw[-,thin] (ATTN3.east) -- (FFN.west);
  
  \node[connector] (CAT) at (3.90,-3.20) {$\mathrm{concat}$};
  \node[output] (R) at (4.60,-2.60) {$\Re$};
  \node[output] (I) at (4.60,-3.80) {$\Im$};
  \node (PsiTxt) at (6.2,-3.20) {$\Psi(\vec s)$ or $\log \Psi(\vec s)$};
  \draw[-,thin] (FFN) -- (CAT);
  \draw[-,thin] (CAT) -- (R);
  \draw[-,thin] (CAT) -- (I);
  \draw[-,thin] (R) -- (PsiTxt);
  \draw[-,thin] (I) -- (PsiTxt);
\end{tikzpicture}
\caption{Transformer neural quantum state. Eight input spins are grouped into overlapping 3-spin patches with stride 2 (overlap 1). Patches are treated as tokens ($x_j=p_j$) and passed to a multi-head self-attention block $\mathcal{A}$ (shown as overlapping rounded rectangles), followed by a feed-forward nonlinearity. Contextualized tokens are concatenated and mapped to real/imaginary heads, yielding $\Psi(\vec s)$ in phase--magnitude form.}
\label{fig:transformer-nqs}
\end{figure}
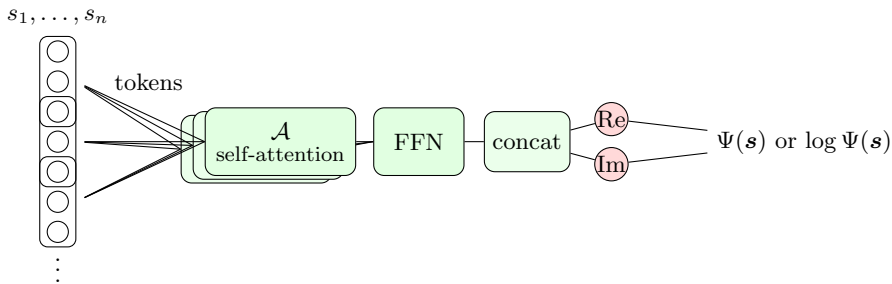

We now consider a transformer neural quantum state (T-NQS). These are related to the NQS ans\"atz introduced in Ref.~\cite{viteritti2023transformer}, which proved quite effective in capturing ground states of frustrated spin models.
Given a spin configuration $\vec s$ of length $n$, we divide the chain into overlapping local patches of fixed size $P$ with stride $\Delta$, producing 
\[
p_j = (s_{j},s_{j+1},\ldots,s_{j+P-1}), \qquad j=1,\ldots,M,
\]
where $M = \lfloor (n-P)/\Delta \rfloor + 1$ is the number of patches. We choose $\Delta = O(n)$ so that there are $O(1)$ patches in total. In general one could embed each patch $p_j$ into a higher-dimensional vector $x_j \in \mathbb{R}^d$, 
but for simplicity we take the embedding to be the identity, $x_j = p_j$, in what follows, and describe the input to the model as the sequence of tokens $X = (x_1,\ldots,x_M)$. This sequence of tokens is processed by self-attention heads. For each head and 
for each token $x_j$, queries, keys, and values are formed as
\begin{equation}
Q = X W^Q, \qquad K = X W^K, \qquad V = X W^V,
\end{equation}
with matrices $W^Q,W^K,W^V \in \mathbb{R}^{P \times d}$ and attention weights are computed as
\begin{equation}
A = \mathrm{softmax}\!\left(\frac{QK^T}{\sqrt{d}}\right).
\end{equation}
Finally, the outputted tokens are obtained as $Z = A V$. After $L$ layers of this procedure, each token $z_j$ is then passed through a feed-forward network of the form
\begin{equation}
z_j \mapsto W_2 \, \sigma(W_1 z_j + b_1) + b_2,
\end{equation}
where $\sigma$ is a scalar nonlinearity. 
The outputs are concatenated to a single feature vector $\mathrm{vec}(Z) \in \mathbb{R}^{Md}$, 
which is linearly projected to two real scalars, $f_R(\vec s)$ and $f_I(\vec s)$. 
The wavefunction amplitude is then as
\begin{equation}
\Psi(\vec s) \quad \text{or} \quad \log \Psi(\vec s) =  f_R(\vec s) + i f_I(\vec s).
\end{equation}
This is illustrated schematically in Fig.~\ref{fig:transformer-nqs}. Weights are chosen as $W_{ij} \sim \mathcal{N}\!\left(0,\sigma_w^2\right),
$ and biases are drawn as $b_i \sim \mathcal{N}(0,\sigma_b^2)$, as before. For plots of entanglement in T-NQS we refer the reader to the main text. 

\subsection{Numerical details} 

Here we report numerical details of the generation of Figure 1 in the main text. For the SN-NQS, we used an ans\"atz of the form $\Psi(\vec s) \propto \exp(i\tanh(\vec w^T\vec s + b))$, \textit{i.e.} a pure phase model with one scalar tanh nonlinearity acting on one affine feature. We restricted to the pure phase model because of the larger entropy relative to the real and general models (c.f. \textsection \ref{sec:logsingle}) though we have shown that other choices behave similarly. Parameters were chosen as $w_i \sim \mathcal{N}(0,1.0), b\sim \mathcal{N}(0.0.5^2)$. For each point we generated 20 independent trial wavefunctions and averaged over 10 random bipartitions or subregions. For the MLP-NQS, our ans\"atz had width $w=3$, depth $d=2$, tanh nonlinearities, and fixed deterministic output projections (all weights set to $1$). We chose to keep the latter fixed to reduce variance. The weights of the MLP were initialized to $w_i \sim \mathcal{N}(0,1.0), b\sim \mathcal{N}(0,0.2^2)$. A LayerNorm was used before each nonlinearity. For each point, we generated 20 trial wavefunctions and averaged over 10 random biparitions or subregions. For the T-NQS, we used a patch size $P=6$, stride 5, embedding dimension 32, 4 attention heads, and 4 layers. The following feed-forward blocks used tanh nonlinearities with width 64 and embedding and attention projections were frozen to reduce variance. That is, only the feed-forward blocks varied across trials with weights initialized to $w_i \sim \mathcal{N}(0,1.0), b\sim \mathcal{N}(0,0.2^2)$. Finally, the output projection heads were fixed, deterministic projections and amplitudes were taken in phase-magnitude form. For each point, we took 20 trial wavefunctions and 10 random bipartitions or subregions.

\section{Proof of entanglement bound for NQS with a single nonlinearity} \label{sec:proofsingle}

For completeness, we repeat the setup. Consider a state on $n$ spins, defined by
\begin{equation}
    \ket{\Psi} = \sum_{\vec s} \Psi(\vec s) \ket{\vec s}
\end{equation}
where $\vec s = (s_1,s_1,\ldots, s_{n})$ labels the spins in e.g. the $Z$ eigenbasis, and $\Psi(\vec s) \in \mathbb{C}$. We assume $\langle \Psi |\Psi\rangle = 1$. We consider the case where $\ket{\Psi}$ is a neural quantum state (NQS) defined by a single nonlinearity for the sake of simplicity, and we generalize to more nonlinearities in Section \ref{sec:proofO1}. In particular, we choose a state of the form
\begin{equation}\label{eq:Psiofs}
    \Psi(\vec s) = \exp(\sigma(\vec w^T\vec s+b)).
\end{equation}
This is close to the most general form for a feed-forward neural network with a single nonlinearity. The most general form is $e^{\sigma(\vec w^T\vec s+b) + \vec u^T\vec s + c}$, but the extra factor will only rescale the rows / columns of the matrix $\widetilde M$ (to be defined below) by separate factors and will not change our conclusions. Hence the form we adopt is without loss of generality. Here $\sigma:\mathbb{R}\to \mathbb{C}$ is an \textit{arbitrary} nonlinearity subject to a few mild constraints due to our technical assumptions below, $w\in\mathbb{R}^n$ is a weight vector, and $b\in \mathbb{R}$ is a bias. We denote $\{1,\ldots, n\}$ by $[n]$. We refer to $t = \vec w^T\vec s+b$ as the \textit{pre-activation}, and when we wish to emphasize the function of $t$ alone we write 
\begin{equation}
    F(t) = \exp(\sigma(t)). 
\end{equation}
We consider entanglement entropy $S_A$ of subregions $A\subset [n]$. Finally, we note that a Bernstein ellipse is the complex domain 
\begin{equation}
    B(a) = \left\{z=x+iy \in \mathbb{C} : \frac{x^2}{\cosh^2(a)} + \frac{y^2}{\sinh^2(a)} < 1\right\}
\end{equation}
where $a>0$, which contains $[-1,1]$. 

\begin{theorem}\label{thm:1nqs}
Suppose $\ket{\Psi}$ is the NQS on $n$ qubits with $\Psi({\vec s})$ defined in Eq.~\eqref{eq:Psiofs}. Suppose $\sup_{\vec s}|\vec w^T\vec s+b|<\bar t$ for an $O(1)$ constant $\bar t>0$. Suppose the function $f(x) = \exp(\sigma(\bar t x))$ admits an analytic extension $f(z)$ to a Bernstein ellipse $B(a)$ enclosing $[-1,1]$ and satisfies $|f(z)| \leq C$ for some $O(1)$ constant $C>0$ on $B(a)$. Denote by $S_A$ the von Neumann entropy of the reduced density matrix on a subset $A\subset [n]$. Then 
    \begin{equation}
        S_A = O(\log n).
    \end{equation}
In particular, no single-nonlinearity NQS of this form can achieve volume law entanglement. 
\end{theorem}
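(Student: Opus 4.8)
The plan is to exploit the fact that, for a single nonlinearity, $\Psi(\vec s)$ depends on the spins only through the scalar pre-activation $t=\vec w^T\vec s+b$, and that $t$ is \emph{additively separable} across any bipartition: writing $\vec s=(\vec x,\vec y)$ with $\vec x$ the spins in $A$ and $\vec y$ those in $B=[n]\setminus A$, we have $t=u(\vec x)+v(\vec y)$ with $u(\vec x)=\vec w_A^T\vec x$ and $v(\vec y)=\vec w_B^T\vec y+b$. Hence the (unnormalized) amplitude matrix $M_{\vec x\vec y}=F(u(\vec x)+v(\vec y))$ is a ``function of a sum'', and such a matrix becomes nearly low rank as soon as $F$ is well approximated by a low-degree polynomial on the range of $t$: if $p$ has degree $d$ then $p(u+v)=\sum_{k=0}^d u^k q_k(v)$ for polynomials $q_k$, so $\bigl(p(u(\vec x)+v(\vec y))\bigr)_{\vec x\vec y}$ has rank at most $d+1$.

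First I would make the approximation quantitative. By hypothesis the range of $t$ lies in $(-\bar t,\bar t)$, so $t/\bar t\in[-1,1]$ and $F(t)=f(t/\bar t)$ with $f$ analytic and bounded by $C$ on the Bernstein ellipse $B(a)$. By a standard Chebyshev/Bernstein truncation estimate (see \textsection\ref{sec:lems}), for every degree $d$ there is a polynomial $p_d$ with $\sup_{[-1,1]}|f-p_d|\le\epsilon_d$ and $\epsilon_d=O(\rho^{-d})$, $\rho=e^{a}>1$. Setting $\tilde M_{\vec x\vec y}=p_d\bigl((u(\vec x)+v(\vec y))/\bar t\bigr)$, the monomial identity above gives $\mathrm{rank}(\tilde M)\le d+1$, while entrywise $|M_{\vec x\vec y}-\tilde M_{\vec x\vec y}|\le\epsilon_d$, so summing over the $2^n$ entries $\|M-\tilde M\|_F\le 2^{n/2}\epsilon_d$.

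Next I would normalize and convert this into an entropy bound. Since $F=\exp(\sigma)$ is continuous and nowhere vanishing, $|F|\ge c_0$ for an $O(1)$ constant $c_0>0$ on the compact set $|t|\le\bar t$, so $\mathcal N=\|M\|_F^2=\sum_{\vec s}|F(t(\vec s))|^2\ge c_0^2\,2^n$. Therefore the normalized amplitude matrix $\hat M=M/\|M\|_F$, whose squared singular values are the eigenvalues $\{\lambda_i\}$ of $\rho_A$, satisfies $\|\hat M-N\|_F\le\epsilon_d/c_0$ for the rank-$(d+1)$ matrix $N=\tilde M/\|M\|_F$. Choosing $d=\Theta(\log n)$ (with an $O(1)$ prefactor depending on $a$, $C$, $c_0$) makes $\delta:=\epsilon_d/c_0\le n^{-1/2}$, and by Eckart--Young $\sum_{i>d+1}\lambda_i\le\|\hat M-N\|_F^2\le\delta^2\le 1/n$. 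Splitting $S_A=-\sum_i\lambda_i\log\lambda_i$ into the $d+1$ largest eigenvalues and the rest, the ``head'' contributes at most $\log(d+1)+O(1)=O(\log n)$ and the ``tail'' at most $\delta^2\log(\dim\mathcal H_A)+O(1)\le\tfrac12\delta^2 n\log 2+O(1)=O(1)$, each bound being the maximum entropy of the relevant number of outcomes at the relevant total weight (the tail has at most $\dim\mathcal H_A\le 2^{n/2}$ nonzero eigenvalues). Hence $S_A=O(\log n)$, which rules out volume law $S_A=\Theta(n)$.

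I expect the difficulty to be bookkeeping rather than a single deep step: the polynomial-approximation rate is classical and the rank bound is an algebraic identity. The two places needing care are (i) the lower bound on the normalization $\|M\|_F$, which is exactly what converts an entrywise error $\epsilon_d$ into an $n$-independent \emph{relative} Frobenius error and is the only use of a lower bound on $|F|$; and (ii) the tail-entropy estimate, where one must verify that the $O(n)$ coming from $\log\dim\mathcal H_A$ is defeated by the $O(1/n)$ accuracy, i.e.\ that degree $d=\Theta(\log n)$ is both necessary and sufficient. This single-feature calculation is the template for the general result: the feature reduction lemma (Lemma \ref{lem:featurereduction}) replaces ``$\Psi$ is a function of one affine feature of $\vec s$'' by ``$\Psi$ is a function of $k+1$ affine features,'' after which the same polynomial-approximation/low-rank argument, now multivariate, yields $S_A=O((k+1)\log n)$.
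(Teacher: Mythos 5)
Your overall architecture is the same as the paper's (split the affine feature as $t=u(\vec x)+v(\vec y)$ across the cut, approximate $F$ by a Chebyshev truncation with error $\epsilon_d=O(\rho^{-d})$, exploit that a degree-$d$ polynomial of a sum gives a low-rank amplitude matrix, then control the entropy difference), and two of your ingredients are in fact sharper than the paper's: the rank bound $\mathrm{rank}\,\tilde M\le d+1$ (the paper settles for $(d+1)(d+2)/2$ via the binomial expansion, which costs nothing at the $O(\log n)$ level), and the Eckart--Young tail-weight bound followed by a head/tail entropy split, which replaces the paper's route through the pure-state trace-distance bound, monotonicity under partial trace, and Fannes--Audenaert. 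Those substitutions are fine (modulo the small slip that the number of nonzero Schmidt coefficients is $2^{\min(|A|,|\bar A|)}$, not $\dim\mathcal H_A\le 2^{n/2}$; using $2^{|A|}\le 2^n$ still gives an $O(1)$ tail).

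The genuine gap is the step ``since $F=\exp(\sigma)$ is continuous and nowhere vanishing, $|F|\ge c_0$ for an \emph{$O(1)$} constant $c_0$,'' which is exactly what lets you take $d=\Theta(\log n)$. This is not implied by the theorem's hypotheses. For a single $n$, compactness gives some positive minimum, but the theorem concerns a family indexed by $n$, and boundedness $|f|\le C$ on a fixed Bernstein ellipse does not prevent $\min_{[-1,1]}|f|$ from decaying arbitrarily fast with $n$: e.g.\ $f(z)=(z^2+\epsilon_n^2)/K$ equals $\exp(\sigma(\bar t x))$ on $[-1,1]$ for a real $\sigma$, satisfies the hypotheses with $O(1)$ constants $\bar t,a,C$, yet has $\min_{[-1,1]}|f|=\epsilon_n^2/K$ as small as you like. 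Worse, in the paper's convention the amplitudes of Eq.~\eqref{eq:Psiofs} are themselves normalized, $\sum_{\vec s}|F(t_{\vec s})|^2=1$, so a pointwise $\Omega(1)$ lower bound is outright impossible (it would force $\|M\|_F^2\ge c_0^2\,2^n\gg1$). Without the $c_0$ bound your relative Frobenius error $2^{n/2}\epsilon_d/\|M\|_F$ is uncontrolled at $d=\Theta(\log n)$ (e.g.\ weights and bias near zero put every $t_{\vec s}$ where $|F|$ is tiny, so $\|M\|_F\ll 2^{n/2}$); the theorem's conclusion is not threatened there, but your intermediate estimate fails. The paper avoids any lower bound on $|F|$: it compares the normalized state to the \emph{normalized} polynomial state, getting $\|\ket\Psi-\ket{\tilde\Psi}\|_2\le 2\sqrt{\varepsilon_{\mathrm{poly}}}\,2^{n/4}$ from $\|F\|_2=1$ and Cauchy--Schwarz alone, at the price of $d=\Theta(n)$ --- harmless, since $\log\mathrm{rank}$ is still $O(\log n)$. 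Your proof is repaired the same way: drop the $c_0$ claim, work with the normalized amplitudes (so the denominator is $1$), take $d=\Theta(n)$ so that $2^{n/2}\epsilon_d\le n^{-1/2}$, and your Eckart--Young and head/tail steps go through verbatim with head contribution $\le\log(d+1)+O(1)=O(\log n)$; alternatively keep $d=\Theta(\log n)$ but then you must add the lower bound on $|F|$ (equivalently on $\|M\|_F/2^{n/2}$) as an explicit extra hypothesis, which the theorem as stated does not grant.
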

\begin{proof}
The proof proceeds with the following steps. 
\begin{enumerate}
    \item We bound $S_A$ by $\log \rank M$, where $M$ is an auxiliary matrix whose entries are defined by $F(t)$.
    \item We approximate $F(t)$ on a bounded domain by a polynomial of degree $d = \Theta(n)$ with exponential accuracy. 
    \item We convert this into a bound on the rank of $M$ by $O(d^2)$.
\end{enumerate}

\textbf{Step 1. }
Let $A \subset [n]$ and let $\bar A$ be its complement. Let us refer to 
\begin{equation}
    z_s = t = \vec w^T\vec s+b
\end{equation}
interchangeably. Let $\mathcal{H}_A, \mathcal{H}_{\bar A}$ denote the Hilbert spaces on $A$ and $\bar A$ and let $\ket{\vec u}, \ket{\vec v}$ be $Z$-basis states on $A$ and $\bar A$, respectively. Let us define \textit{split affine features} $x_{u},y_{v} \in\mathbb{R}$ by $x_{u}= \sum_{i\in A} w_i u_i +b/2$, $y_v = \sum_{i\in \bar A} w_i v_i + b/2$, which satisfy $x_{ u}+y_{ v} = z_s$. Write
\begin{equation}
    \ket{\Psi} = \sum_{\vec u,\vec v} M_{\vec u,\vec v}\ket{\vec u}_A\ket{\vec v}_{\bar A},\qquad M_{\vec u,\vec v} =  F(x_u + y_v)
\end{equation}
where $F(t) = \exp(\sigma(t))$. Then
\begin{equation}
\begin{aligned}
    \rho_A &= \Tr_{\bar A}\ket{\Psi}\bra{\Psi} = \sum_{\vec u,\vec u',\vec v} M_{\vec u,\vec v}M^*_{\vec u',\vec v} \ket{\vec u}_A\bra{\vec u'}_A\\
    &= \sum_{\vec u,\vec u'} (MM^\dagger)_{\vec u,\vec u'}\ket{\vec u}_A\bra{\vec u'}_A. 
\end{aligned}
\end{equation}
The von Neumann entropy is $S_A = -\Tr(\rho_A\log \rho_A)$. We note that $S_A \leq \log \rank MM^\dagger$. To see this, denote the eigenvalues of $MM^\dagger$ by $\lambda_i\geq 0$ for $i=1,\ldots, r$ where $ r= \rank MM^\dagger$. Because the uniform distribution maximizes von Neumann entropy, $-\sum_{i=1}^r \lambda_i \log \lambda_i \leq \log r$. Furthermore, we note that $\rank MM^\dagger = \rank M$ and use the fact 
\begin{equation}
    S_A \leq \log \rank M
\end{equation}
for what follows. \par 

\textbf{Step 2. }  
We have assumed the pre-activation is bounded, $|t|<\bar t$ for some constant $\bar t>0$. Write $F(t) = \exp(\sigma(t))$ and define $f(x) = F(\bar tx)$. Our assumption is that $f(x)$ can be analytically extended to a function $f(z)$ on a Bernstein ellipse $B(a)$ neighborhood of $[-1,1]$ where $a>0$ and where it satisfies $|f(z)|<C$ for some $O(1)$ constant $C$. Then for $x \in [-1,1]$ we have
\begin{equation}
    \left|f(x) - \sum_{n=0}^d a_{n}T_n(x)\right| \leq \frac{2Ce^{-ad}}{e^a-1}
\end{equation}
for suitable coefficients $a_{n}$ satisfying $|a_{n}|<2Ce^{-an}$ for $n>0$, where $T_n(x)$ is the $n$th Chebyshev polynomial (Lemma \ref{lemma:bernstein}). It follows that 
\begin{subequations}
    \begin{align}
        \left|F(t) -P_d(t)\right| \leq  \frac{2Ce^{-ad}}{e^a-1} \equiv \varepsilon_{\mathrm{poly}}
    \end{align}
\end{subequations}
for $t \in [-\bar t,\bar t]$ with a degree $d$ polynomial $P_d(t) = \sum_{n=0}^d a_n T_n(t/\bar t)$. At a later stage, we will need to bound $2\sqrt{\varepsilon_{\mathrm{poly}}} 2^{n/4} n$ by an $O(1)$ constant. With this in mind, it suffices to take $d$ to satisfy 
\begin{equation}\label{eq:degree}
    d \geq \frac{n}{2a} \ln 2 + \frac1a \ln n + \frac1a \ln\left(\frac{8C}{e^a-1}\right) = \Theta(n)
\end{equation}
which implies $2\sqrt{\varepsilon_{\mathrm{poly}}} 2^{n/4} n \leq 1$. \par 

\textbf{Step 3. } From Step 2, there exists a degree $d$ polynomial $P_d:\mathbb{R}\to \mathbb{C}$, defined as $P_d(t) = \sum_{k=0}^d \alpha_k t^k$, such that 
\begin{equation}
    \max_{t \in [-\bar t,\bar t]} |F(t) -P_d(t)| \leq \varepsilon_{\text{poly}}
\end{equation}
with $d = \Theta(n)$. In what follows, we write
\begin{equation}
    F(z_s) = P_d(z_s) + \Delta_s
\end{equation}
and define $Q  = \|P_d\|_2^2 \equiv \sum_{\vec s}|P_d(z_s)|^2$. We similarly define $\|\Delta\|_2^2 = \sum_{\vec s} |\Delta_s|^2 \leq 2^{n}\varepsilon_{\mathrm{poly}}^2 $ and $\|F\|_2^2 = \sum_{\vec s} |F(z_s)|^2 = 1$. We define the normalized auxiliary state 
\begin{equation}
    \ket{\tilde \Psi} = Q^{-1/2} \sum_{\vec s} P_d(z_s) \ket{\vec s} = Q^{-1/2}\sum_{\vec u,\vec v} \widetilde M_{\vec u,\vec v} \ket{\vec u}_A\ket{\vec v}_{\bar A},\qquad \widetilde M_{\vec u,\vec v} =  P_d(x_u+y_v).
\end{equation}
Let $\tilde S_A$ be the von Neumann entropy of $\tilde \rho_A = \Tr_{\bar A}\ket{\tilde \Psi}\bra{\tilde \Psi}$, which again satisfies $\tilde S_A \leq \log\rank \widetilde M$, following identically the previous arguments for $S_A$. We can bound the rank of $\widetilde M$ as follows. Note that
\begin{subequations}
\begin{align}
    \widetilde M &= \sum_{\vec u,\vec v} \widetilde M_{\vec u,\vec v} \ket{\vec u}\bra{\vec v} \\
    &= \sum_{\vec u,\vec v}\sum_{k=0}^d\alpha_k(x_u+y_v)^k \ket{\vec u}\bra{\vec v}\\
    &= \sum_{\vec u,\vec v} \sum_{k=0}^d \sum_{m=0}^k \alpha_{k}\binom{k}{m} x_u^my_v^{k-m} \ket{\vec u}\bra{\vec v}\\
    &= \sum_{k=0}^d\sum_{m=0}^k \alpha_k\binom{k}{m}\left(\sum_{\vec u} x_u^m \ket{\vec u}\right)\left(\sum_{\vec v} y_v^{k-m} \bra{\vec v}  \right)\\
    &\equiv \sum_{k=0}^d\sum_{m=0}^k \tilde m_{k,m}^{\text{rank-1}}
\end{align}
\end{subequations}
where we've expanded using $P_d(t) = \sum_{k=0}^d \alpha_{k} t^k$ and the binomial theorem. This expresses $\widetilde M$ as the sum of rank-1 matrices $\tilde m_{k,m}^{\text{rank-1}}$. By the subadditivity of rank, 
\begin{equation}
  \rank  \widetilde M \leq \sum_{k=0}^d\sum_{m=0}^k1=\frac12(d+1)(d+2) = \Theta(n^2)
\end{equation}
which implies $\tilde S_A \leq \log \frac12(d+1)(d+2) = O(\log n)$ by Step 1. To transform this into a bound for $S_A$, we need a few more steps. \par 

First, we define $\ket{\delta} = \ket{\Psi} - \ket{\tilde \Psi}$ and wish to bound $\|\ket{\delta}\|_2^2$. We have
\begin{subequations}
    \begin{align}        \|\ket\delta\|_2^2 &= 2-2\Re \langle \Psi|\tilde \Psi\rangle \\
    &= 2(1-Q^{-1/2}\Re  \sum_{\vec s} F(z_s)^* P_d(z_s))\\
    &= 2(1-Q^{-1/2} \Re  \sum_{\vec s} |P_d(z_s)|^2  - Q^{-1/2} \Re\sum_{\vec s}\Delta_s^*P_d(z_s))\\
    &= 2(1-Q^{1/2}- Q^{-1/2} \Re\sum_{\vec s}\Delta_s^*P_d(z_s))\\
    &\leq 2 |1-Q^{1/2}| + 2\left| Q^{-1/2} \sum_{\vec s} \Delta_s^* P_d(z_s)\right|.
    \end{align}
\end{subequations}
We bound the two terms individually. First, using $\|F\|_2 \leq \|P_d\|_2 + \|\Delta\|_2$ we have $|1-Q^{1/2}| = | \|F\|_2-\|P_d\|_2| \leq \|\Delta\|_2$. Second, using the Cauchy-Schwarz inequality we have 
\begin{equation}
    \left| Q^{-1/2} \sum_s \Delta_s^* P_d(z_s)\right| \leq Q^{-1/2} \|\Delta\|_2 \|P_d\|_2 = \|\Delta \|_2. 
\end{equation}
We conclude 
\begin{subequations}
    \begin{align}        \|\ket\delta\|_2^2 &\leq 2\cdot 2 \cdot \|\Delta\|_2 \\
    &\leq 4\varepsilon_{\mathrm{poly}}2^{n/2}\\
    \to\|\ket\delta\|_2 &\leq 2\sqrt{\varepsilon_{\mathrm{poly}}}2^{n/4}.
    \end{align}
\end{subequations}
Next, we wish to bound the trace distance $\frac12 \|\rho_A-\tilde \rho_A\|_1$. From Lemma \ref{lemma:fvdg} we have 
\begin{equation}
    \frac12 \|\rho-\tilde \rho\|_1 \leq \|\ket\delta\|_2 \leq 2\sqrt{\varepsilon_{\text{poly}}} 2^{n/4}.
\end{equation}
The trace distance can only decrease under partial trace (Lemma \ref{lem:trdist}), and more generally any completely positive trace preserving map. Hence 
\begin{equation}
    \frac12 \|\rho_A - \tilde \rho_A\|_1 \leq \frac12 \|\rho-\tilde \rho\|_1.
\end{equation}
Finally, we apply the Fannes-Audenaert inequality,
\begin{equation}
    |S_A - \tilde S_A| \leq T \log(2^{|A|}-1) +H_2(T)
\end{equation}
where $T = \frac12 \|\rho_A-\tilde \rho_A\|_1$ and $H_2(x) = -x\log x -(1-x)\log(1-x) < 1$.  It follows that 
\begin{equation}
    \begin{aligned}
        |S_A - \tilde S_A| &\leq 2\sqrt{\varepsilon_{\text{poly}}} 2^{n/4}\log(2^{|A|}) + 1 \\
        &\leq 2\sqrt{\varepsilon_{\text{poly}}} 2^{n/4} |A| + 1 \\
        &\leq 2\sqrt{\varepsilon_{\text{poly}}} 2^{n/4} n + 1 
    \end{aligned}
\end{equation}
which is $O(1)$ due to our choice of the degree $d$ of the polynomial (Eq.~\eqref{eq:degree}). Since $\tilde S_A = O(\log n)$, it follows that 
\begin{equation}
    S_A = O(\log n).
\end{equation}
Moreover, with $|A| = O(n)$, we have $S_A = O(\log |A|)$, which rules out the possibility of volume law entanglement. 
\end{proof}

Various aspects of the technical assumptions on $F(t)$ in the preceding theorem can be relaxed. In fact, the only essential property used was good approximation of $F(t)$ by polynomials. We state a reformulation of the theorem below with a slightly broader though potentially less useful set of technical assumptions. 

\begin{theorem}(Alternative to \ref{thm:1nqs})
    Suppose $\ket\Psi$ is a NQS on $n$ qubits with $\Psi(s)$ defined in Eq.~\eqref{eq:Psiofs}. Suppose $\sup_{\vec s}|\vec w^T \vec s+b| < \bar t_n$ for a possibly $n$-dependent constant $\bar t_n>0$. Suppose there exist $n$-independent constants $\alpha,\beta,\gamma>0$ such that
    \begin{equation}       \inf_{\mathrm{deg}\,\, p \leq d} \|F-p\|_{L^{\infty}([-\bar t_n,\bar t_n])} \leq \alpha e^{-\beta d^\gamma},
    \end{equation}
    where the infimum is over all polynomials of degree $\leq d$. Then 
    \begin{equation}
        S_A = O(\log n)
    \end{equation}
    and if $|A| = O(n)$, $S_A = O(\log |A|)$. 
    \label{thm:alternate1nqs}
\end{theorem}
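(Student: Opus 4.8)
The plan is to mirror the proof of Theorem \ref{thm:1nqs} almost verbatim, replacing the Bernstein-ellipse input with the hypothesized stretched-exponential approximation rate. The three-step skeleton is unchanged: (i) bound $S_A \le \log \rank M$ via the split-affine-feature factorization $M_{\vec u,\vec v} = F(x_u+y_v)$, which is purely linear algebra and does not depend on any analyticity assumption; (ii) replace $F$ on $[-\bar t_n,\bar t_n]$ by a polynomial $P_d$ of degree $d$ with error $\varepsilon_{\mathrm{poly}} \le \alpha e^{-\beta d^\gamma}$, using the hypothesis directly in place of Lemma \ref{lemma:bernstein}; (iii) expand $P_d(x_u+y_v)$ by the binomial theorem into a sum of $\tfrac12(d+1)(d+2)$ rank-one pieces, so $\rank \widetilde M \le \tfrac12(d+1)(d+2)$ and hence $\tilde S_A \le 2\log(d+1)$. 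The only new arithmetic is choosing $d$ as a function of $n$: we need the end-of-proof error term $2\sqrt{\varepsilon_{\mathrm{poly}}}\,2^{n/4}\,n$ to be $O(1)$, i.e. $\alpha e^{-\beta d^\gamma} 2^{n/2} n^2 \le O(1)$, which is satisfied by taking $d = \Theta\!\big((n/\beta)^{1/\gamma}\big)$ — concretely any $d \ge \big(\tfrac{1}{\beta}(\tfrac{n}{2}\ln 2 + 2\ln n + \ln(4\alpha))\big)^{1/\gamma}$ works. With this choice $\tilde S_A \le 2\log(d+1) = O(\tfrac1\gamma \log n) = O(\log n)$, the implied constant depending on $\alpha,\beta,\gamma$ but not on $n$.

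The perturbation-to-the-true-state part of the argument carries over with no change at all, since it never used the source of the polynomial approximation, only the bound $\|\Delta\|_2^2 \le 2^n \varepsilon_{\mathrm{poly}}^2$. Thus I would: define $\ket{\tilde\Psi} = Q^{-1/2}\sum_{\vec s} P_d(z_s)\ket{\vec s}$ with $Q = \|P_d\|_2^2$; bound $\|\ket\delta\|_2 \le 2\sqrt{\varepsilon_{\mathrm{poly}}}\,2^{n/4}$ exactly as before via $|1-Q^{1/2}| \le \|\Delta\|_2$ and Cauchy–Schwarz; pass to the trace distance $\tfrac12\|\rho_A-\tilde\rho_A\|_1 \le \|\ket\delta\|_2$ by Lemma \ref{lemma:fvdg} and monotonicity under partial trace (Lemma \ref{lem:trdist}); and finally apply Fannes–Audenaert to get $|S_A - \tilde S_A| \le 2\sqrt{\varepsilon_{\mathrm{poly}}}\,2^{n/4} n + 1 = O(1)$. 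Combining with $\tilde S_A = O(\log n)$ yields $S_A = O(\log n)$, and when $|A| = O(n)$ this is $O(\log|A|)$, ruling out volume law.

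One subtlety worth flagging is the $n$-dependence of $\bar t_n$: unlike in Theorem \ref{thm:1nqs}, the domain $[-\bar t_n,\bar t_n]$ may grow with $n$. This is harmless because the hypothesis on the approximation rate is stated uniformly in $n$ over exactly this (possibly growing) domain — i.e. the constants $\alpha,\beta,\gamma$ are assumed $n$-independent for the sequence of problems $\|F - p\|_{L^\infty([-\bar t_n,\bar t_n])}$ — so the degree bound $d = \Theta((n/\beta)^{1/\gamma})$ already absorbs it. The main (mild) obstacle is just bookkeeping: verifying that the stretched exponential $e^{-\beta d^\gamma}$, even with $\gamma<1$, still decays fast enough that a merely polynomially-in-$n$ larger degree $d = \Theta(n^{1/\gamma})$ suffices to kill the $2^{n/2}$ factor — which it does, since $d^\gamma = \Theta(n)$ by construction — and checking that $\rank \widetilde M = O(d^2) = O(n^{2/\gamma})$ is still only polynomial in $n$, so that $\log \rank \widetilde M = O(\log n)$ survives. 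No new analytic ideas are needed beyond Theorem \ref{thm:1nqs}; the statement is essentially a repackaging that isolates ``good polynomial approximability of $F$ on the pre-activation range'' as the one hypothesis that the whole argument rests on.
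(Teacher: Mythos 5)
Your proposal is correct and follows essentially the same route as the paper: the paper's own proof simply notes that only Step 2 of Theorem \ref{thm:1nqs} changes, sets $\varepsilon_{\mathrm{poly}} = \alpha e^{-\beta d^\gamma}$, and picks $d = \Theta(n^{1/\gamma})$ so that $2\sqrt{\varepsilon_{\mathrm{poly}}}\,2^{n/4}n \leq 1$, exactly as you do. Your explicit degree bound and the remark that the perturbation/Fannes--Audenaert part is untouched are just a more detailed write-up of the same argument.
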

\begin{proof}
    Only Step 2 of the proof of Theorem \ref{thm:1nqs} needs to be modified. Defining $\varepsilon_{\mathrm{poly}} = \alpha e^{-\beta d^\gamma}$, in order to achieve $2\sqrt{\varepsilon_{\mathrm{poly}}}2^{n/4}n\leq 1$, it suffices to choose choose $d = \Theta(n^{1/\gamma})$. Once again $S_A \leq \log \frac12 (d+1)(d+2) + O(1) = O(\log n)$. 
\end{proof}
\noindent The Dicke state of Theorem~\ref{thm:hh} is best viewed as a complementary explicit example of logarithmic entanglement scaling. The analyticity and bounded-domain assumptions used here are convenient sufficient conditions for obtaining a low-degree polynomial representation on a continuum domain, but they are not necessary for special examples such as the Dicke state. Indeed, in that case the affine feature $t(\vec s)=\sum_{i=1}^n s_i$ takes values only in the discrete set $\{-n,-n+2,\ldots,n\}$ (for even $n$). On this discrete set, the amplitudes can be represented exactly by a polynomial in a single affine feature. For example,
\begin{equation}
    p_n(t) =\prod_{m=1}^{n/2}\left(1-\frac{t^2}{(2m)^2}\right)
\end{equation}
satisfies $p_n(0)=1$ and $p_n(t)=0$ for every nonzero admissible value
$t\in\{-n,-n+2,\ldots,n\}$. Hence $\delta_{\sum_i s_i,0}=p_n\!\left(\sum_{i=1}^n s_i\right)$ and the Dicke state amplitudes are exactly a degree-$n$ polynomial of one affine feature. Hence the preceding arguments immediately yield $S_A=O(\log n)$ for this state, in agreement with Theorem~\ref{thm:hh}$.$

\section{Feature reduction lemma}\label{sec:frl}

The first step to go from NQS with a single nonlinearity to a general number of nonlinearities is to bound the complexity, in a suitable sense, of such a state. The purpose of this section is to establish the claim that a NQS whose network consists of $k$ nonlinearities (meaning $k$ elementary scalar nonlinear operations) must necessarily be a function of $\mu \leq k+1$ \textit{affine features}
\begin{equation}
    t_i = \vec w^T_i \vec s + b_i, \qquad (i=1,\ldots, \mu),
\end{equation}
with one technical assumption: that the graph that computes $\Psi$ is a directed acyclic graph, as we will discuss. This is most useful when $k\ll n$. In that case, though na\"ively $\Psi(\vec s) = \Psi(s_1,\ldots, s_n)$ is a function of $n$ variables, we must have that $\Psi(\vec s) = \mathcal{G}(t_1,\ldots, t_\mu)$ for some number $\mu\ll n$ of affine features. We call this \textit{feature reduction}, and it is the key to bounding entanglement in the general case. \par

We can argue for this informally as follows. Consider the first nonlinearities that the bit-string $\vec s$ ``encounters" along the computational graph of $\Psi$. Label them $\phi_1,\ldots, \phi_k$, each of which is a scalar nonlinear function. Because no nonlinearity occurred before, the inputs are $\mu$ pre-activations $t_i = \vec w_i^T \vec s + b_i$, $i=1,\ldots, \mu$, where $\mu\leq k$ is, and possibly functions of the other $\phi_i$. The rest of the computational graph computes $\Psi(\vec s)$, but only the pre-activations $t_i$ entered, with possibly one more overall affine feature. We prove this more formally with the notion of directed acyclic graphs in what follows. \par 

\begin{definition}
A directed acyclic graph (DAG) is a directed graph with no directed cycles. To each neural quantum state $\Psi(\vec s)$ whose computational graph is directed and acyclic, we refer to its DAG as DAG($\Psi$), with:
    \begin{enumerate}
        \item \textbf{Inputs: } $n$ source nodes outputting the scalars $s_1,\ldots, s_n$
        \item \textbf{Linear nodes: } each of which outputs an affine combination of its input scalars, $L(x^{(1)},\ldots, x^{(r)}) = a_0 + \sum_{j=1}^ra_j x^{(j)}$. 
        Concatenations, splits, and residual/skip connections are subsumed by these.
        \item \textbf{Nonlinear nodes (neurons): } each of which applies a scalar nonlinearity to an affine combination of its input scalars, $\phi(x^{(1)},\ldots, x^{(r)}) = \sigma(b_0 + \sum_{j=1}^r w_j x^{(j)})$. The $\sigma$'s need not be the same.
        \item \textbf{Output: } a sink node that outputs $\Psi(\vec s)$ and is affine in its inputs. 
    \end{enumerate}
\end{definition}

We refer to Fig.~\ref{fig:dag} for a simple example of a DAG$(\Psi)$. We note that any DAG has a \textit{topological sort}, which is a linear ordering of its nodes such that for any directed edge $(\phi_1,\phi_2)$ from node $\phi_1$ to node $\phi_2$, $\phi_1$ comes before $\phi_2$ in the ordering. Below, we will refer to an affine functional of the form $\ell(\vec s) = \vec u^T\vec s + c$ as an \textit{affine feature}.

\begin{lemma} (Feature reduction) Let $\ket\Psi = \sum_{\vec s}\Psi(\vec s)\ket{\vec s}$ be a NQS acting on $n$ spins whose computational graph is a directed acyclic graph with $k$ scalar nonlinear nodes. Then there exist affine features $t_1,\ldots, t_\mu$ together with a function $\mathcal{G}:\mathbb{R}^{\mu} \to \mathbb{C}$ such that
\begin{equation}
    \Psi(\vec s) = G(t_1(\vec s), \ldots, t_\mu(\vec s))
\end{equation}
with $\mu\leq k+1$. In particular, $\Psi$ depends on at most $k+1$ affine features of $\vec s$. 
\label{lem:featurereduction}
\end{lemma}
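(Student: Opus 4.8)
The plan is to induct along a topological sort of DAG($\Psi$), maintaining the invariant that the scalar output of every node visited so far is an affine function of $\vec s$ together with the outputs of those nonlinear nodes that precede it in the ordering. Concretely, I would fix a topological sort $\nu_1, \nu_2, \ldots$ of the nodes, let $\phi_1, \ldots, \phi_k$ be the nonlinear nodes in the order they appear, and let $g_j$ denote the scalar output of $\phi_j$. The claim to prove by induction on the topological position is: the output of any node $\nu$ can be written as $a_0 + \sum_j a_j s_j + \sum_{i : \phi_i \prec \nu} c_i g_i$, i.e. an affine combination of the inputs and of the outputs of the strictly-earlier neurons.

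For the inductive step I would case on the node type. A source node outputs $s_j$, trivially of the required form. A linear node outputs an affine combination of its inputs; substituting the inductive hypothesis for each input and collecting terms preserves the form (affine combinations of affine-plus-$\{g_i\}$ expressions are again affine-plus-$\{g_i\}$). A nonlinear node $\phi_m$ first forms an affine combination $z_m = b_0 + \sum w_r x^{(r)}$ of its inputs; by the inductive hypothesis each $x^{(r)}$ is affine in $\vec s$ and in $\{g_i : i < m\}$ (only neurons strictly before $\phi_m$ in the sort can feed it, by acyclicity), so $z_m$ has the same form, and then $g_m = \sigma(z_m)$ is recorded as a new primitive. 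The output sink is an affine node, so its value is $a_0 + \sum_j a_j s_j + \sum_{i=1}^k c_i g_i$.

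To finish, observe that each $g_i = \sigma_i(z_i)$ where $z_i = b_0^{(i)} + \sum_j w_j^{(i)} s_j + \sum_{l < i} c_l^{(i)} g_l$. Unrolling this recursion from $i=1$ upward, each $g_i$ is ultimately a function of $\vec s$ only through the $i$ affine features $z_1, \ldots, z_i$ — but $z_1, \ldots, z_k$ are themselves not affine in $\vec s$ because of the nested $g_l$ terms. The cleaner statement is: define $\mu' = k$ affine-in-$\vec s$ features by peeling off, at each neuron, the genuinely new linear functional of $\vec s$; more carefully, since $z_i$ is affine in $\vec s$ and in $g_1, \ldots, g_{i-1}$, and each $g_l$ for $l<i$ is by induction a function of $(t_1, \ldots, t_{l})$ for affine features $t_1, \ldots, t_l$ of $\vec s$, we get that $z_i$ — hence $g_i$ — is a function of $\vec s$ through at most $i$ affine features. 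Taking $i$ up to $k$ and then feeding the $g_i$'s into the final affine head, which also contributes its own direct affine dependence $\sum_j a_j s_j$ on $\vec s$, shows $\Psi(\vec s) = \mathcal{G}(t_1, \ldots, t_\mu)$ with $\mu \le k+1$: at most one new affine feature per neuron, plus one for the output layer's direct linear term. Wrapping all the composed $\sigma_i$'s and affine maps into a single $\mathcal{G}: \mathbb{R}^\mu \to \mathbb{C}$ gives the claim.

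The main obstacle is bookkeeping rather than conceptual: making precise the counting "at most one affine feature per nonlinearity, plus one for the output" when the affine features $t_i$ feeding neuron $\phi_i$ are themselves allowed to involve the earlier $g_l$'s. The resolution is to note that an affine function of $(\vec s, g_1, \ldots, g_{i-1})$ is, after substituting the inductive representations $g_l = \mathcal{G}_l(t_1, \ldots, t_l)$, a function of $\vec s$ through the union of the feature sets already in play together with at most one genuinely new linear functional of $\vec s$ (the $\vec s$-linear part of $z_i$ not already spanned) — so the feature count increases by at most one at each neuron. One must also check the degenerate cases (a neuron whose $\vec s$-linear part is already in the span contributes no new feature, so $\mu$ can be smaller than $k+1$) and confirm the output node is the unique place an extra $+1$ can arise. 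I would present this as: "the feature set grows by at most one per nonlinear node and by at most one at the output, for a total of $\mu \le k+1$," with the substitution argument as the key lemma making the induction go through.
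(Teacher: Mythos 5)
Your proposal is correct and follows essentially the same route as the paper's proof: a topological sort of the DAG, an induction showing each neuron's pre-activation is affine in $\vec s$ and in the outputs of strictly earlier neurons (so each nonlinearity contributes at most one new $\vec s$-affine feature), plus one additional feature from the affine output node, with $\mu \le k+1$ after discarding redundant features. Your explicit case analysis over node types and the remark about linearly dependent features only make explicit what the paper states more tersely.
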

\begin{proof}
    Let $T = (v_1, \ldots, v_w)$ be a topological sort of DAG($\Psi$). Let $N = (\phi_1,\ldots, \phi_k)$ be the sorted nonlinear nodes in $T$. For each $\phi_j$, its pre-activation can be written as
    \begin{equation}
        a_j(\vec s) = \vec u_j^T \vec s + c_j + \Lambda_j(\phi_1(\vec s),\ldots, \phi_{j-1}(\vec s))
    \end{equation}
    where $\Lambda_j$ is affine in its arguments. Define the \textit{new affine feature} contributing to $\phi_j$ to be 
    \begin{equation}
        t_j(\vec s) =  \vec u_j^T \vec s + c_j.
    \end{equation}
    We claim that, along with some affine feature $t_0(\vec s) = \vec u_0^T\vec s + c_0$, there is a subcollection $\{t_{j_0},\ldots, t_{j_\mu}\}$ with $\mu \leq k+1$ such that     \begin{equation}\label{eq:PsiAffine}
        \Psi(\vec s) = \mathcal{G}(t_{j_0}(\vec s),\ldots, t_{j_\mu}(\vec s))
    \end{equation}
    for some function $\mathcal{G}:\mathbb{R}^{\mu}\to \mathbb{C}$. We prove this by induction. Clearly $\phi_1$ has a pre-activation 
    \begin{equation}
        a_1(\vec s) = \vec u_1^T\vec s + c_1 = t_1(\vec s)
    \end{equation}
    so $\phi_1 = \phi_1(t_1(\vec s))$. Next, suppose $\phi_i(\vec s) = F_i(t_1(\vec s), \ldots, t_i(\vec s))$ for all $i<j$. The pre-activation for $\phi_j$ can be written as 
    \begin{equation}
        a_j(\vec s) = t_j(\vec s) +\Lambda_j(\phi_1(\vec s),\ldots, \phi_{j-1}(\vec s)) = t_j(\vec s) +\tilde\Lambda_j(t_1(\vec s),\ldots, t_{j-1}(\vec s))
    \end{equation}
    so $\phi_j(\vec s) = F_j(t_1(\vec s),\ldots, t_j(\vec s))$. Thus each new nonlinearity adds at most one new affine feature $t_j$. The output node is affine in its inputs, and hence equals 
    \begin{equation}
        \Psi(\vec s) = t_0(\vec s) + \Gamma(\phi_1(\vec s) ,\ldots, \phi_k(\vec s)) = t_0(\vec s) + \tilde \Gamma(t_1(\vec s),\ldots, t_k(\vec s))
    \end{equation}
    for some affine feature $t_0$. Finally, the $t_0,\ldots, t_k$ may not be linearly independent. Hence there is a subcollection $\{t_{j_0},\ldots, t_{j_\mu}\}$ with $\mu\leq k+1$ such that Eq.~\eqref{eq:PsiAffine} holds. 
\end{proof}

\begin{remark}
Requiring $\Psi$ to be a DAG is not a severe restriction. For instance, MLPs, CNNs, transformers, and all feed-forward neural networks are DAGs. Recurrent neural networks are not included in general, unless unrolled over $T$ time steps,
in which case they become a feed-forward DAG with an effective number of scalar nonlinearities $k_{\rm eff}$ scaling with $T$.
\end{remark}

Establishing that any NQS $\Psi(\vec s)$ whose neural network is a directed acyclic graph with $O(1)$ nonlinearities is a function only $O(1)$ affine features $\vec u^T \vec s + c$ of the input greatly reduces the difficulty of studying general bounded-nonlinearity NQS, which we turn to next. 

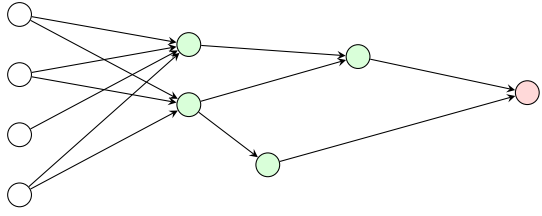
\begin{figure}
\begin{tikzpicture}[x=1.5cm, y=0.8cm, >=stealth]
  \tikzstyle{node}=[circle,draw=black,minimum size=9pt,inner sep=0pt]
  \tikzstyle{input}=[node,fill=white!15]
  \tikzstyle{hidden}=[node,fill=green!15]
  \tikzstyle{output}=[node,fill=red!15]

  \node[input] (I1) at (0,-0.5) {};
  \node[input] (I2) at (0,-1.5) {};
  \node[input] (I3) at (0,-2.5) {};
  \node[input] (I4) at (0,-3.5) {};

  \node[hidden] (H1) at (1.5,-1.0) {};
  \node[hidden] (H2) at (1.5,-2.0) {};
  \node[hidden] (H3) at (3.0,-1.2) {};
  \node[hidden] (H4) at (2.2,-3.0) {};

  \node[output] (O) at (4.5,-1.8) {};

  \draw[->,thin] (I1) -- (H1);
  \draw[->,thin] (I2) -- (H1);
  \draw[->,thin] (I3) -- (H1);
  \draw[->,thin] (I4) -- (H1);

  \draw[->,thin] (I1) -- (H2);
  \draw[->,thin] (I2) -- (H2);
  \draw[->,thin] (I4) -- (H2);

  \draw[->,thin] (H1) -- (H3);
  \draw[->,thin] (H2) -- (H3);
  \draw[->,thin] (H2) -- (H4);

  \draw[->,thin] (H3) -- (O);
  \draw[->,thin] (H4) -- (O);

\end{tikzpicture}
\caption{Example of a directed acyclic graph (DAG) representation of a feed-forward neural quantum state. Inputs (white) are affinely combined and passed through intermediate nonlinear nodes (green), ultimately producing the output (red).}
\label{fig:dag}
\end{figure}

\section{Proof of entanglement bound for NQS with $k$ nonlinearities}
\label{sec:proofO1}

In this section, we generalize Theorem \ref{thm:1nqs}, which was for a single nonlinearity, to a feed-forward neural quantum state with $k$ nonlinearities. We'll be most interested in the case when $k$ is $O(1)$, but we will keep $k$ general. Once again, 
\begin{equation}
    \ket{\Psi} = \sum_{\vec s} \Psi(\vec s)\ket{\vec s}
\end{equation}
but we do \textit{not} assume a particular form for $\Psi(\vec s)$ other than that it corresponds to a feed-forward neural network with $k$ nonlinearities whose computational graph is directed and acyclic, and certain constraints imposed by our technical analyticity assumptions. By the feature reduction lemma (Lemma \ref{lem:featurereduction}), 
\begin{equation}\label{eq:PsitoG}
    \Psi(\vec s) = \mathcal{G}(t_1(\vec s),\ldots, t_\mu(\vec s))
\end{equation}
for $\mu\leq k+1$ where each pre-activation is of the form $t_i(\vec s) = \vec w_i^T \vec s + b_i$. 

\begin{theorem}\label{thm:O1nqs}
    Suppose $\ket{\Psi} = \sum_{\vec s} \Psi(\vec s) \ket{\vec s}$ is the feed-forward NQS on $n$ qubits, so Eq.~\eqref{eq:PsitoG} holds. Suppose $\sup_{\vec s,i}|\vec w_i^T \vec s + b_i|<\bar t$ for an $n$-independent constant $\bar t>0$. Suppose the function $g(x_1,\ldots, x_\mu) = \mathcal{G}(\bar tx_1, \ldots, \bar t x_\mu)$ admits an analytic extension $g(z_1,\ldots, z_\mu)$ to a product of Berstein ellipses $\mathcal{B} = B(a_1)\times \cdots \times B(a_\mu)$, where $a_i>0$, and satisfies $|g(z_1,\ldots, z_\mu)|\leq C$ for some $O(1)$ constant $C$ on $\mathcal{B}$. Denote by $S_A$ the von Neumann entropy of the reduced density matrix on a subset $A\subset [n]$. Then 
    \begin{equation}
        S_A = O(k\log n).
    \end{equation}
In particular, no such NQS with $k\sim O(1)$ nonlinearities can achieve volume law entanglement. 
\end{theorem}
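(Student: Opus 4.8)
The plan is to follow the three-step template of the proof of Theorem~\ref{thm:1nqs}, upgrading each step from one affine feature to the $\mu\le k+1$ features furnished by the feature reduction lemma (Lemma~\ref{lem:featurereduction}). First, fixing a bipartition $A\,|\,\bar A$, I would split each feature as $t_i(\vec s)=x_i^{u}+y_i^{v}$ with $x_i^{u}=\sum_{j\in A}w_{i,j}u_j+b_i/2$ and $y_i^{v}=\sum_{j\in\bar A}w_{i,j}v_j+b_i/2$, so that $\ket\Psi=\sum_{\vec u,\vec v}M_{\vec u,\vec v}\ket{\vec u}_A\ket{\vec v}_{\bar A}$ with $M_{\vec u,\vec v}=\mathcal{G}(x_1^{u}+y_1^{v},\ldots,x_\mu^{u}+y_\mu^{v})$, and then $S_A\le\log\rank M$ exactly as in Step~1 there.

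For Step~2 I would use a multivariate Bernstein-ellipse estimate: truncate the tensor-product Chebyshev expansion of $g(x_1,\ldots,x_\mu)=\mathcal{G}(\bar t x_1,\ldots,\bar t x_\mu)$ at degree $d$ in each variable. Analyticity on $\mathcal{B}=B(a_1)\times\cdots\times B(a_\mu)$ together with $|g|\le C$ bounds the Chebyshev coefficients by $O(C\prod_i e^{-a_i k_i})$, so a union bound over the $\mu$ coordinate directions controls the truncation error and yields a polynomial $P$ of degree $\le d$ in each feature with $\max_{\vec s}\lvert\mathcal{G}(t_1(\vec s),\ldots,t_\mu(\vec s))-P(t_1(\vec s),\ldots,t_\mu(\vec s))\rvert\le\varepsilon_{\mathrm{poly}}$ and $\varepsilon_{\mathrm{poly}}=O(\mu C e^{-ad})$, $a=\min_i a_i$. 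Choosing $d=\Theta(n)$ as in Eq.~\eqref{eq:degree} (up to a harmless additive $\tfrac1a\ln\mu$) then makes $2\sqrt{\varepsilon_{\mathrm{poly}}}\,2^{n/4}n\le 1$.

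Step~3 is the rank bound. Writing $P(z_1,\ldots,z_\mu)=\sum_{\vec m:\,m_i\le d}\alpha_{\vec m}\prod_{i}z_i^{m_i}$ and substituting $z_i=x_i^{u}+y_i^{v}$, I expand each factor by the binomial theorem, $z_i^{m_i}=\sum_{\ell_i=0}^{m_i}\binom{m_i}{\ell_i}(x_i^{u})^{\ell_i}(y_i^{v})^{m_i-\ell_i}$. Collecting terms, $\widetilde M_{\vec u,\vec v}=P(x_1^{u}+y_1^{v},\ldots)$ becomes a sum of rank-1 matrices $\bigl(\sum_{\vec u}\prod_i(x_i^{u})^{\ell_i}\ket{\vec u}\bigr)\bigl(\sum_{\vec v}\prod_i(y_i^{v})^{m_i-\ell_i}\bra{\vec v}\bigr)$ indexed by pairs $(\vec m,\vec\ell)$ with $0\le\ell_i\le m_i\le d$; since for each $i$ there are $\binom{d+2}{2}$ such $(m_i,\ell_i)$, subadditivity of rank gives $\rank\widetilde M\le\bigl(\tfrac{(d+1)(d+2)}{2}\bigr)^{\mu}=O(n^{2\mu})$ and hence $\tilde S_A\le\log\rank\widetilde M=O(\mu\log n)=O(k\log n)$. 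The passage from $\tilde S_A$ to $S_A$ then repeats Theorem~\ref{thm:1nqs} verbatim: $\|\Delta\|_2^2\le 2^n\varepsilon_{\mathrm{poly}}^2$ and Cauchy-Schwarz give $\|\,\ket\Psi-\ket{\tilde\Psi}\,\|_2\le 2\sqrt{\varepsilon_{\mathrm{poly}}}\,2^{n/4}$, Fuchs-van de Graaf (Lemma~\ref{lemma:fvdg}) and monotonicity of trace distance under partial trace (Lemma~\ref{lem:trdist}) transfer this to $\rho_A,\tilde\rho_A$, and the Fannes-Audenaert inequality yields $\lvert S_A-\tilde S_A\rvert\le 2\sqrt{\varepsilon_{\mathrm{poly}}}\,2^{n/4}n+1=O(1)$ by the choice of $d$. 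Thus $S_A=O(k\log n)$, and with $|A|=O(n)$ and $k\sim O(1)$ this is $O(\log|A|)$, excluding volume law.

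The one genuinely new ingredient, and the main thing to get right, is the multivariate approximation in Step~2: one needs tensor-product Chebyshev truncation to converge exponentially in the per-variable degree throughout a polyellipse of analyticity, so that a single choice $d=\Theta(n)$ suffices simultaneously in all $\mu$ directions — this is exactly what the product-of-Bernstein-ellipses hypothesis on $g$ buys, and it will be relegated to the technical lemmas. Everything else is bookkeeping: the rank count in Step~3 is essential but routine, the point being that each of the $\mu\le k+1$ features multiplies the rank bound by an independent $O(d^2)$ factor, which remains polynomial in $n$ precisely because $k$ is held fixed.
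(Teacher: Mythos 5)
Your proposal is correct and follows essentially the same route as the paper: the same split-feature rank bound $S_A\le\log\rank M$, the same tensor-product Chebyshev truncation on a polyellipse (the paper's Lemma~\ref{lemma:multivariable}, proved exactly by the coefficient bounds and union bound you sketch), the same per-variable binomial expansion giving $\rank\widetilde M\le((d+1)(d+2)/2)^\mu$, and the same Fuchs--van de Graaf / partial-trace monotonicity / Fannes--Audenaert transfer with $d=\Theta(n)$. The only cosmetic difference is that your $\varepsilon_{\mathrm{poly}}=O(\mu Ce^{-ad})$ absorbs the $\mu$-dependent prefactor $(2r_*/(r_*-1))^\mu$ that the paper tracks explicitly in its choice of $d$, which is harmless for $k\sim O(1)$.
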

\begin{proof}
We proceed with the same 3 steps as in Theorem \ref{thm:1nqs}, with modifications where needed.\par 
\textbf{Step 1.} Let $A \subset [n]$ and let $\bar A$ be its complement. Let us refer to 
\begin{equation}
    z_{i,s} = t_i = \vec w_i^T\vec s+b_i
\end{equation}
interchangeably. Let $\mathcal{H}_A, \mathcal{H}_{\bar A}$ denote the Hilbert spaces on $A$ and $\bar A$ and let $\ket{\vec u}, \ket{\vec v}$ be $Z$-basis states on $A$ and $\bar A$, respectively. Let us define \textit{split affine features} $x_{i,u},y_{i,v} \in\mathbb{R}$ by $x_{i,u}= \sum_{j\in A} w_{i,j} u_j +b_i/2$, $y_{i,v} = \sum_{j\in \bar A} w_{i,j} v_j + b_i/2$, which satisfy $x_{i, u}+y_{i, v} = z_{i,s}$. Using Eq.~\eqref{eq:PsitoG}, we may write 
\begin{equation}
    \ket\Psi = \sum_{\vec u,\vec v} M_{\vec u,\vec v}\ket{\vec u}_A\ket{\vec v}_{\bar A},\qquad M_{\vec u,\vec v} =  \mathcal{G}(x_{1,u} + y_{1,v},\ldots, x_{\mu,u}+y_{\mu,v}).
\end{equation}
Then
\begin{equation}
\begin{aligned}
    \rho_A &= \Tr_{\bar A}\ket{\Psi}\bra{\Psi} = \sum_{\vec u,\vec u',\vec v} M_{\vec u,\vec v}M^*_{\vec u',\vec v} \ket{\vec u}_A\bra{\vec u'}_A\\
    &= \sum_{\vec u,\vec u'} (MM^\dagger)_{\vec u,\vec u'}\ket{\vec u}_A\bra{\vec u'}_A. 
\end{aligned}
\end{equation}
and the von Neumann entanglement entropy once again satisfies $S_A \leq \log \rank M$. \par 

\textbf{Step 2. } We have assumed the pre-activations are bounded by some constant $\bar t>0$. Moreover, we have assumed that $g(x_1,\ldots, x_\mu) = \mathcal{G}(\bar tx_1,\ldots, \bar tx_\mu)$ can be analytically extended from $[-1,1]^\mu$ to the product of Bernstein ellipses $\mathcal{B} = B(a_1)\times \cdots \times B(a_\mu)$ for $a_i >0$, where it satisfies a bound $|g|<C$. Then there exists a multivariable polynomial $S_d(\vec x)$ with degree no more than $d$ in each variable such that for all $\vec x \in [-1,1]^\mu$,
\begin{equation}
    \left|g(\vec x) - S_d(\vec x)\right| \leq C\mu \left(\frac{2r_*}{r_*-1}\right)^\mu  r_*^{-d-1}
\end{equation}
where $r_* = \min_j e^{a_j}$, by Lemma \ref{lemma:multivariable}. It follows that 
\begin{equation}
    |\mathcal{G}(\vec t) - P_d(\vec t)| \leq  C\mu \left(\frac{2r_*}{r_*-1}\right)^\mu  r_*^{-d-1} \equiv \varepsilon_{\mathrm{poly}}
\end{equation}
for $\vec t = (t_1,\ldots, t_\mu) \in [-\bar t,\bar t]^\mu$ for a multivariable polynomial $P_d$ with degree no more than $d$ in each variable. At a later stage, we will need to bound $2\sqrt{\varepsilon_{\mathrm{poly}}} 2^{n/4} n$ by an $O(1)$ constant. With this in mind, it suffices to take $d$ to satisfy 
\begin{equation}\label{eq:degree2}
    d \geq \frac{n}{2\log r_*}\log 2 + \frac{2\log n}{\log r_*} + \frac{\log(C2^{\mu+2}\mu(r_*-1)^{-\mu}r_*^{\mu-1})}{\log r_*} = \Theta(n)
\end{equation}
which implies $2\sqrt{\varepsilon_{\mathrm{poly}}} 2^{n/4} n \leq 1$.

\textbf{Step 3. } In what follows, write 
\begin{equation}
    \mathcal{G}(z_{1,s},\ldots, z_{\mu,s}) = P_d(z_{1,s},\ldots, z_{\mu,s}) + \Delta_s
\end{equation}
and define $Q = \|P_d\|_2^2 \equiv \sum_{\vec s} |P_d(z_{1,s},\ldots, z_{\mu,s})|^2$, $\|\Delta\|_2^2 = \sum_{\vec s}|\Delta_s|^2\leq 2^n\varepsilon_{\mathrm{poly}}^2$ and $\|\mathcal{G}\|_2^2=\sum_{\vec s}|\mathcal{G}(z_{1,s},\ldots, z_{\mu,s})|^2=1$. We define the normalized auxiliary state
\begin{equation}
    \ket{\tilde\Psi} = Q^{-1/2} \sum_{\vec s}P_d(z_{1,s},\ldots, z_{\mu,s}) \ket{\vec s} = Q^{-1/2} \sum_{\vec u,\vec v} \widetilde M_{\vec u,\vec v}\ket{\vec u}_A\ket{\vec v}_{\bar A},
\end{equation}
where $\widetilde M_{\vec u,\vec v} = P_d(x_{1,u}+y_{1,v},\ldots, x_{\mu,u}+y_{\mu,v})$. Let $\tilde S_A$ be the von Neumann entanglement entropy of $\tilde \rho_A = \Tr_{\bar A} \ket{\tilde \Psi}\bra{\tilde \Psi}$, which again satisfies $\tilde S_A \leq \log\rank \widetilde M$. We can bound the rank of $\widetilde M$ as follows. We will write 
\begin{equation}
    P_d(\vec t) = \sum_{0\leq n_1,\ldots, n_\mu \leq d} \alpha_{\vec n} t_1^{n_1} \cdots t_\mu^{n_\mu}
\end{equation}
for the expansion of $P_d$, where $\vec n = (n_1,\ldots, n_\mu)$. We have 
\begin{subequations}
    \begin{align}
        \widetilde M &= \sum_{\vec u,\vec v} \widetilde M_{\vec u,\vec v} \ket{\vec u}\bra{\vec v}\\
        &= \sum_{\vec u,\vec v} \sum_{0\leq n_1,\ldots, n_\mu \leq d} \alpha_{\vec n} (x_{1,u}+y_{1,v})^{n_1} \cdots (x_{\mu,u}+y_{\mu,v})^{n_\mu} \ket{\vec u}\bra{\vec v}\\
        &= \sum_{\vec u,\vec v} \sum_{0\leq n_1,\ldots, n_\mu \leq d} \alpha_{\vec n} \prod_{\ell=1}^\mu \sum_{k_\ell=0}^{n_\ell} \binom{n_\ell}{k_\ell} x_{\ell,u}^{n_\ell-k_\ell}y_{\ell,v}^{k_\ell} \ket{\vec u}\bra{\vec v}\\
        &= \sum_{0\leq n_1,\ldots, n_\mu \leq d}\sum_{k_1,\ldots, k_\mu=0}^{n_1,\ldots,n_\mu} \alpha_{\vec n} \left(\sum_{\vec u}\prod_{\ell=1}^\mu \binom{n_\ell}{k_\ell} x_{\ell,u}^{n_\ell-k_\ell} \ket{\vec u}\right)\left(\sum_{\vec v}\prod_{\ell=1}^\mu y^{k_\ell}_{\ell,v}\bra{\vec v}\right)\\
        &\equiv \sum_{0\leq n_1,\ldots, n_\mu \leq d}\sum_{k_1,\ldots, k_\mu=0}^{n_1,\ldots,n_\mu}\tilde m_{\vec n, k_1,\ldots, k_\mu}^{\text{rank-1}}
    \end{align}
\end{subequations}
where we've expanded using the binomial theorem and reorganized as a sum over rank-1 matrices $\tilde m_{\vec n, k_1,\ldots, k_\mu}^{\text{rank-1}}$. By subadditivity of rank, 
\begin{equation}
    \rank \widetilde M \leq  \sum_{0\leq n_1,\ldots, n_\mu \leq d}\sum_{k_1,\ldots, k_\mu=0}^{n_1,\ldots,n_\mu} 1 = \left((d+1)(d+2)/2\right)^\mu
\end{equation}
which implies $\tilde S_A \leq \log \left((d+1)(d+2)/2\right)^\mu = O(\mu\log n)$, since $d$ is $\Theta(n)$. The rest of the proof proceeds just as in Theorem \ref{thm:1nqs}, but we reproduce it for completeness. We define $\ket{\delta} = \ket{\Psi} - \ket{\tilde \Psi}$ and wish to bound $\|\ket{\delta}\|_2^2$. We have
\begin{subequations}
    \begin{align}        \|\ket\delta\|_2^2 &= 2-2\Re \langle \Psi|\tilde \Psi\rangle \\
    &\leq 2 |1-Q^{1/2}| + 2\left| Q^{-1/2} \sum_s \Delta_s^* P_d(z_s)\right|.
    \end{align}
\end{subequations}
We bound the two terms individually. First, using $\|\mathcal{G}\|_2 \leq \|P_d\|_2 + \|\Delta\|_2$ we have $|1-Q^{1/2}| = | \|\mathcal{G}\|_2-\|P_d\|_2| \leq \|\Delta\|_2$. Second, using the Cauchy-Schwarz inequality we have 
\begin{equation}
    \left| Q^{-1/2} \sum_s \Delta_s^* P_d(z_s)\right| \leq Q^{-1/2} \|\Delta\|_2 \|P_d\|_2 = \|\Delta \|_2. 
\end{equation}
We conclude 
\begin{subequations}
    \begin{align}        \|\ket\delta\|_2^2 &\leq 2\cdot 2 \cdot \|\Delta\|_2 \\
    &\leq 4\varepsilon_{\mathrm{poly}}2^{n/2}\\
    \to\|\ket\delta\|_2 &\leq 2\sqrt{\varepsilon_{\mathrm{poly}}}2^{n/4}.
    \end{align}
\end{subequations}
Next, we wish to bound the trace distance $\frac12 \|\rho_A-\tilde \rho_A\|_1$. From Lemma \ref{lemma:fvdg} we have 
\begin{equation}
    \frac12 \|\rho-\tilde \rho\|_1 \leq \|\ket\delta\|_2 \leq 2\sqrt{\varepsilon_{\text{poly}}} 2^{n/4}.
\end{equation}
The trace distance can only decrease under partial trace (Lemma \ref{lem:trdist}), and more generally any completely positive trace preserving map. Hence 
\begin{equation}
    \frac12 \|\rho_A - \tilde \rho_A\|_1 \leq \frac12 \|\rho-\tilde \rho\|_1.
\end{equation}
Finally, we apply the Fannes-Audenaert inequality,
\begin{equation}
    |S_A - \tilde S_A| \leq T \log(2^{|A|}-1) +H_2(T)
\end{equation}
where $T = \frac12 \|\rho_A-\tilde \rho_A\|_1$ and $H_2(x) = -x\log x -(1-x)\log(1-x) < 1$.  It follows that 
\begin{equation}
    \begin{aligned}
        |S_A - \tilde S_A| &\leq 2\sqrt{\varepsilon_{\text{poly}}} 2^{n/4}\log(2^{|A|}) + 1 \\
        &\leq 2\sqrt{\varepsilon_{\text{poly}}} 2^{n/4} |A| + 1 \\
        &\leq 2\sqrt{\varepsilon_{\text{poly}}} 2^{n/4} n + 1 
    \end{aligned}
\end{equation}
which is $O(1)$ due to our choice of the $d$. Since $\tilde S_A = O(\mu \log n)$, it follows that $S_A = O(\mu \log n)$. Since $\mu \leq k+1$, we may write this as
\begin{equation}
    S_A = O(k \log n).
\end{equation}
Moreover, with $|A| = O(n)$, we have $S_A = O(k\log |A|)$, which rules out the possibility of volume law entanglement if $k\sim O(1)$. 
\end{proof}

As for the single nonlinearity case, various aspects of the technical assumptions on $G(\vec t)$ in the preceding theorem can be relaxed. In fact, the only essential property used was good approximation of $G(\vec t)$ by polynomials. We state a reformulation of the theorem below with a slightly broader, though potentially less useful  set of technical assumptions. 
\begin{theorem}(Alternative to \ref{thm:O1nqs})
    Suppose $\ket\Psi$ is a NQS on $n$ qubits with $\Psi(s)$ defined as in the setting of Theorem \ref{thm:O1nqs}. Suppose $\sup_{\vec s,i}|\vec w_i^T \vec s+b| < \bar t_n$ for a possibly $n$-dependent constant $\bar t_n>0$. Suppose there exist $n$-independent constants $\alpha,\beta,\gamma>0$ such that
    \begin{equation}       \inf_{\mathrm{deg}\,\, p \leq d} \|G-p\|_{L^{\infty}([-\bar t_n,\bar t_n]^\mu)} \leq \alpha e^{-\beta d^\gamma},
    \end{equation}
    where the infimum is over all polynomials of degree $\leq d$ in each variable. Then 
    \begin{equation}
        S_A = O(k\log n).
    \end{equation}
\end{theorem}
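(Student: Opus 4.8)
The plan is to reuse the three-step proof of Theorem~\ref{thm:O1nqs} essentially verbatim, changing only the polynomial-approximation estimate of Step~2 --- exactly the way Theorem~\ref{thm:alternate1nqs} modified Theorem~\ref{thm:1nqs}. By the feature reduction lemma (Lemma~\ref{lem:featurereduction}) we still have $\Psi(\vec s)=\mathcal{G}(t_1(\vec s),\ldots,t_\mu(\vec s))$ with $\mu\le k+1$, so Step~1 goes through unchanged: $S_A\le\log\rank M$ with $M_{\vec u,\vec v}=\mathcal{G}(x_{1,u}+y_{1,v},\ldots,x_{\mu,u}+y_{\mu,v})$, using the split affine features and the $\rank MM^\dagger=\rank M$ observation.

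For Step~2 I would simply invoke the new hypothesis: there is a multivariable polynomial $P_d$ of degree at most $d$ in each variable with $\|\mathcal{G}-P_d\|_{L^\infty([-\bar t_n,\bar t_n]^\mu)}\le\alpha e^{-\beta d^\gamma}=:\varepsilon_{\mathrm{poly}}$. As before we will eventually need $2\sqrt{\varepsilon_{\mathrm{poly}}}\,2^{n/4}n\le 1$; since $\alpha e^{-\beta d^\gamma}\le\tfrac14 2^{-n/2}n^{-2}$ is equivalent to $\beta d^\gamma\ge\tfrac{n}{2}\ln 2+2\ln n+O(1)$, it suffices to take $d=\Theta(n^{1/\gamma})$. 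Note that the downstream $L^2$ bound $\|\Delta\|_2^2=\sum_{\vec s}|\Delta_s|^2\le 2^n\varepsilon_{\mathrm{poly}}^2$ uses only that each $z_{i,s}\in[-\bar t_n,\bar t_n]$ together with the uniform estimate, so the possible $n$-dependence of $\bar t_n$ never enters the argument and is harmless.

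Step~3 then proceeds word for word: expand $P_d(\vec t)=\sum_{0\le n_1,\ldots,n_\mu\le d}\alpha_{\vec n}\,t_1^{n_1}\cdots t_\mu^{n_\mu}$, apply the binomial theorem to each factor $(x_{\ell,u}+y_{\ell,v})^{n_\ell}$, and write $\widetilde M$ as a sum of $\big((d+1)(d+2)/2\big)^\mu$ rank-one matrices, so that $\rank\widetilde M\le\big((d+1)(d+2)/2\big)^\mu$ and $\tilde S_A\le\mu\log\big((d+1)(d+2)/2\big)$. With $d=\Theta(n^{1/\gamma})$ and $\gamma,\mu$ both $n$-independent, $\log(d^2)=2\gamma^{-1}\log n+O(1)$, hence $\tilde S_A=O(\mu\log n)$. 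The remainder --- bounding $\|\ket\delta\|_2\le 2\sqrt{\varepsilon_{\mathrm{poly}}}\,2^{n/4}$ via the triangle and Cauchy--Schwarz inequalities together with $\|\mathcal{G}\|_2=1$, passing to trace distance via Fuchs--van de Graaf (Lemma~\ref{lemma:fvdg}), using monotonicity under partial trace (Lemma~\ref{lem:trdist}), and applying Fannes--Audenaert to get $|S_A-\tilde S_A|\le 2\sqrt{\varepsilon_{\mathrm{poly}}}\,2^{n/4}n+1=O(1)$ --- is literally unchanged. Combining the two gives $S_A=O(\mu\log n)=O(k\log n)$.

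I do not expect a genuine obstacle here: the entire skeleton is already in place from Theorem~\ref{thm:O1nqs}, and this statement merely swaps the concrete Bernstein-ellipse estimate for the abstract Jackson-type decay $e^{-\beta d^\gamma}$. The one point worth flagging is that a subexponential rate ($\gamma<1$) forces the larger degree $d=\Theta(n^{1/\gamma})$ rather than the $\Theta(n)$ of Theorem~\ref{thm:O1nqs}; this is cosmetic, because the rank bound is polynomial in $d$, so after taking logarithms the extra factor $\gamma^{-1}$ is absorbed into the $O(\cdot)$, and one still needs $d=\Theta(n^{1/\gamma})$ (not larger) so that the error control $2\sqrt{\varepsilon_{\mathrm{poly}}}\,2^{n/4}n\le 1$ holds.
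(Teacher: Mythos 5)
Your proposal is correct and follows essentially the same route as the paper, which proves this statement by repeating the argument of Theorem~\ref{thm:O1nqs} with only Step~2 replaced by the abstract approximation hypothesis, taking $d=\Theta(n^{1/\gamma})$ so that $2\sqrt{\varepsilon_{\mathrm{poly}}}\,2^{n/4}n\le 1$ while the rank bound $\big((d+1)(d+2)/2\big)^\mu$ still yields $\tilde S_A=O(\mu\log n)$. Your observation that the possible $n$-dependence of $\bar t_n$ is harmless, since only the uniform error estimate enters the $L^2$ bound on $\|\Delta\|_2$, is exactly the point that makes the paper's ``identical up to minor modifications'' remark valid.
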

\begin{proof}
    Identical, up to minor modifications, to the proof of Theorem \ref{thm:alternate1nqs}. 
\end{proof}

\begin{corollary}(Finite-horizon recurrent NQS)
If a recurrent NQS is evaluated for $T$ steps and can be unrolled into a feed-forward DAG with $k_{\rm eff}$ scalar nonlinear nodes,
then under the same bounded-preactivation and analyticity/boundedness assumptions (uniformly over the unrolled graph),
the entanglement bound holds with $k$ replaced by $k_{\rm eff}$:
\begin{equation}
S_A = O(k_{\rm eff}\log n).
\end{equation}
In particular, for a recurrent cell with $k_{\rm cell}$ scalar nonlinearities per step, $k_{\rm eff}\sim T\,k_{\rm cell}$.
\end{corollary}

\begin{remark}
Theorems \ref{thm:1nqs} and \ref{thm:O1nqs} as stated do not apply to NQS built from ReLUs or other nonanalytic nonlinearities. However, as seen in our numerical experiments, SN-NQS with ReLUs exhibit the same entanglement scaling as the analytic nonlinearities. This suggests that the results here can be extended to piecewise analytic NQS (e.g. ReLU-networks), but we do not pursue this further here. In any case, $\text{ReLU}(x)$ can be well-approximated by analytic nonlinearities such as $\text{softplus}_\beta(x) = (1/\beta) \log(1+e^{\beta x})$.
\end{remark}

\section{Neural quantum state variational Monte Carlo}\label{sec:nqsvmc}

In the main text, we claimed that a family of NQS on $n$ qubits but with a bounded $k\sim O(1)$ number are even more efficient than general NQS, polynomially reducing the computational scaling of variational Monte Carlo (VMC). In this section, we explain this claim. \par 

Neural quantum states, coupled with VMC, offer a powerful variational technique for finding ground states of quantum systems. Given a parameterized NQS $\Psi_\theta$, the ground state of a Hamiltonian $H$ can be approximated by minimizing $\langle \Psi_\theta|H|\Psi_\theta\rangle$, the expectation value of the energy, with respect to $\theta$, the set of free parameters. In the case of NQS, $\theta$ represents the network parameters and the optimization is typically gradient-based using a differentiable representation of the energy expectation value. If basis states are labelled by $\vec s$, we may write
\begin{equation}
    \min_\theta \langle \Psi_\theta|H|\Psi_\theta\rangle = \min_\theta \mathbb{E}_{\vec s\sim|\Psi_\theta(\vec s)|^2}\left( \sum_{\vec s'} \langle \vec s | H|\vec s'\rangle \frac{\Psi_\theta(\vec s')}{\Psi_\theta(\vec s)}\right)
\end{equation}
and define the local energy $E_{\mathrm{loc}}(\vec s) \equiv \sum_{\vec s'} \langle \vec s | H|\vec s'\rangle \frac{\Psi_\theta(\vec s')}{\Psi_\theta(\vec s)}$. The most taxing step of NQS-VMC is computing local energy. Assuming a local or sufficiently sparse Hamiltonian (\textit{i.e.} $O(n)$ entries per row), this requires $O(n)$ forward passes through $\Psi_\theta$. Finally, we compute the gradient as 
\begin{equation}
    \pdv{E_\theta}{\theta} \approx \sum_{i=1}^K (E_{\text{loc}}(\vec s^{(i)}) - E_\theta)^*\pdv{\log \Psi_\theta(\vec s^{(i)})}{\theta}
\end{equation}
where $\vec s^{(i)}$ are the $K$ samples used to evaluate $\mathbb{E}_{s\sim |\Psi_\theta|^2}$. The cost of sampling $s\sim |\Psi_\theta|^2$ and the cost of $\pdv{\log \Psi_\theta(s^{(i)})}{\theta}$ are the cost of a single forward pass each. Since each forward pass processes an input of length $n$, the cost scales roughly as at least $\Omega(n|\text{params}|)$ for most commonly used architectures~\cite{sharir2022towards}. This implies an overall cost of $\Omega(n^2|\text{params}|)$ for a typical NQS, driven by the linear number of forward passes for
computing the local energy. We refer the reader to Ref.~\cite{sharir2022towards} for further description of the scaling. \par 

Now consider a family of feed-forward NQS on $n$ qubits with a bounded $k\sim O(1)$ number of nonlinearities. As established by Lemma \ref{lem:featurereduction}, such an NQS can be expressed as a function of $\mu\leq k+1$ affine features: 
\begin{equation}
    \Psi_\theta(\vec s) = \mathcal{G}_\theta(t_1(\vec s),\ldots, t_\mu(\vec s)). 
\end{equation}
Therefore, each forward pass need only process an input of length $\mu \sim O(1)$ and its cost scales roughly as at least $\Omega(|\text{params}|)$. Therefore, the overall cost in this case is $\Omega(n|\text{params}|)$, a linear-in-$n$ improvement from the typical case scaling, without even counting the possible reduction in $|\text{params}|$.

\section{Other lemmas}
\label{sec:lems}
\begin{lemma}\label{lemma:bernstein}(Theorem 8.2 in \cite{trefethen}) Let $f$ be a function on $[-1,1]$ such that it can be extended analytically to the \textit{Bernstein ellipse}
\begin{equation}
    B(a) = \left\{z=x+iy \in \mathbb{C} : \frac{x^2}{\cosh^2(a)} + \frac{y^2}{\sinh^2(a)} < 1\right\}
\end{equation}
where $a>0$ and where $|f(z)|<M$ for $z\in B(a)$ and some $M$. Then with $\rho = e^{a} >1$, for $x\in [-1,1]$, we have
\begin{equation}
    \left|f(x) - \sum_{n=0}^{d}a_nT_n(x) \right| \leq \frac{2M \rho^{-d}}{\rho-1}
\end{equation}
where $T_n(x)$ is the $n'$th Chebyshev polynomial and the coefficients satisfy $|a_n|\leq 2M \rho^{-n}$. 
\end{lemma}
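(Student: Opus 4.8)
The plan is to reduce the claim to a Laurent-coefficient estimate via the Joukowski map. First I would substitute $x = \tfrac{1}{2}(z + z^{-1})$. This map carries the open annulus $\{\,\rho^{-1} < |z| < \rho\,\}$, with $\rho = e^{a}$, onto the open Bernstein ellipse $B(a)$: the circle $|z| = e^{s}$ for $0 \le s < a$ is sent to the ellipse with semi-axes $\cosh s$ and $\sinh s$, which lies inside $B(a)$ since $\cosh s < \cosh a$ and $\sinh s < \sinh a$, while $z \mapsto z^{-1}$ fixes the image. Hence $u(z) := f\big(\tfrac{1}{2}(z+z^{-1})\big)$ is analytic on the annulus, bounded by $M$ on every circle $|z| = r$ with $1 \le r < \rho$, and invariant under $z \mapsto z^{-1}$.

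Next I would expand $u$ in its Laurent series $u(z) = \sum_{k \in \mathbb{Z}} c_{k} z^{k}$, valid on the annulus, and note that the symmetry $u(z) = u(z^{-1})$ forces $c_{k} = c_{-k}$. Setting $z = e^{i\theta}$ and using $\tfrac{1}{2}(z^{k} + z^{-k}) = T_{k}(\cos\theta)$, this yields $f(x) = c_{0} + \sum_{k \ge 1} 2 c_{k} T_{k}(x)$ for $x \in [-1,1]$, so the Chebyshev coefficients in the statement are $a_{0} = c_{0}$ and $a_{k} = 2 c_{k}$ for $k \ge 1$.

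Then the coefficient bound is a one-line Cauchy estimate: for $1 \le r < \rho$, $|c_{k}| = \big|\tfrac{1}{2\pi i}\oint_{|z|=r} u(z)\, z^{-k-1}\, dz\big| \le M r^{-k}$, and letting $r \uparrow \rho$ gives $|c_{k}| \le M \rho^{-k}$; hence $|a_{k}| \le 2M\rho^{-k}$ for $k \ge 1$ (and $|a_{0}| \le M \le 2M$, consistent with the stated inequality). Finally, since $|T_{n}(x)| \le 1$ on $[-1,1]$, the truncation error is bounded by a geometric tail: $\big|f(x) - \sum_{n=0}^{d} a_{n} T_{n}(x)\big| = \big|\sum_{n > d} a_{n} T_{n}(x)\big| \le \sum_{n > d} 2M\rho^{-n} = \dfrac{2M\rho^{-d}}{\rho-1}$, which is exactly the claimed bound.

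The only place requiring care is the geometry of the Joukowski map: verifying that the preimage of the open ellipse $B(a)$ is exactly the open annulus $\rho^{-1} < |z| < \rho$ (so the Laurent series converges on precisely the right region), that $u$ inherits the bound $M$ on each interior circle so that the $r \uparrow \rho$ limit in the Cauchy estimate is legitimate, and that the Laurent series converges uniformly on $|z|=1$ so that the termwise passage to the Chebyshev series is valid. Everything else — the coefficient estimate and the geometric-series truncation bound — is routine. Alternatively one may simply cite \cite{trefethen} directly.
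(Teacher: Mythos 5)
Your proposal is correct. The paper does not prove this lemma at all --- it is stated as a bare citation of Theorem 8.2 of \cite{trefethen} --- and your argument is precisely the standard proof behind that citation: transplant $f$ to the annulus $\rho^{-1}<|z|<\rho$ via the Joukowski map, use the symmetry $c_k=c_{-k}$ of the Laurent coefficients to identify $a_0=c_0$, $a_k=2c_k$ with the Chebyshev coefficients, bound $|c_k|\leq M\rho^{-k}$ by a Cauchy estimate on circles $|z|=r$ with $r\uparrow\rho$, and sum the geometric tail $\sum_{n>d}2M\rho^{-n}=2M\rho^{-d}/(\rho-1)$. The points you flag as needing care are indeed the only ones, and they go through: the confocal-ellipse foliation shows the annulus maps onto $B(a)$, the bound $|u|\leq M$ holds on every interior circle, and the unit circle is a compact subset of the open annulus, so the Laurent series converges uniformly there and the termwise passage to the Chebyshev expansion of $f$ on $[-1,1]$ is legitimate.
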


\begin{lemma}\label{lemma:multivariable} (Multivariable extension of Lemma \ref{lemma:bernstein}) Let $f$ be a function on $[-1,1]^k$ such that it can be extended analytically to the product of Bernstein ellipses 
\begin{equation}
    \mathcal{B} = B(a_1) \times \cdots \times B(a_k)
\end{equation}
where $a_j>0$ and assume $\sup_{ \mathcal{B}} |f|<M$. Let $\rho_j = e^{a_j}>1$ and let $\rho_* = \min_j \rho_j$. Then for $\vec x \in [-1,1]^k$ we have 
\begin{equation}
    \left|f(\vec x) - S_d(\vec x)\right| \leq C_{M,k,\rho_*} \, \rho_*^{-d}
\end{equation}
where $C_{M,k,\rho_*} = Mk\rho_{*}^{-1}\left( \frac{2\rho_*}{\rho_*-1} \right)^k$ and
\begin{equation} \label{eq:SDtruncation}
S_d(\vec x) = \sum_{0\leq n_j\leq d}c_{\vec n}\prod_{j=1}^kT_{n_j}(x_j)
\end{equation}
is a multivariable polynomial of degree no more than $d$ in each variable, with $T_n(x)$ the $n$th Chebyshev polynomial.
\end{lemma}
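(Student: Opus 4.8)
The plan is to realize $S_d$ as the truncation of the multivariate Chebyshev expansion of $f$, bound the Chebyshev coefficients by applying Lemma~\ref{lemma:bernstein} one variable at a time, and then estimate the truncation error by a geometric tail sum. Throughout, $c_{\vec n}$ denotes the Chebyshev coefficient of $f$ at the multi-index $\vec n=(n_1,\ldots,n_k)$, so that $S_d$ as written in the statement is precisely the partial sum of the Chebyshev series of $f$ over $0\le n_j\le d$.

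The first step is to establish the coefficient bound
\begin{equation}
    |c_{\vec n}| \le 2^k M \prod_{j=1}^{k} \rho_j^{-n_j}, \qquad \vec n \in \mathbb{Z}_{\ge 0}^k,
\end{equation}
by peeling off one variable at a time. Fixing $(x_2,\ldots,x_k)$ anywhere in $B(a_2)\times\cdots\times B(a_k)$, the map $x_1\mapsto f(x_1,x_2,\ldots,x_k)$ is analytic on $B(a_1)$ and bounded by $M$, so Lemma~\ref{lemma:bernstein} gives that its $n_1$-th Chebyshev coefficient $a_{n_1}(x_2,\ldots,x_k)$ satisfies $|a_{n_1}|\le 2M\rho_1^{-n_1}$; moreover $a_{n_1}$, being an integral of $f$ against a fixed $x_1$-kernel, is jointly analytic in $(x_2,\ldots,x_k)$ on the remaining product of ellipses and obeys that bound uniformly there. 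Iterating over $x_2,\ldots,x_k$ accrues one factor $2$ and one factor $\rho_j^{-n_j}$ per step, giving the displayed estimate. The same bound yields $\sum_{\vec n}|c_{\vec n}| \le 2^k M \prod_j \rho_j/(\rho_j-1) < \infty$, so the Chebyshev series of $f$ converges absolutely and uniformly on $[-1,1]^k$ (by the Weierstrass $M$-test, using $|T_{n_j}(x_j)|\le 1$) and hence equals $f$ there.

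The second step is the tail estimate. On $[-1,1]^k$, $f-S_d$ consists exactly of the terms with $\max_j n_j > d$, so a union bound over which coordinate exceeds $d$ gives
\begin{equation}
    |f(\vec x)-S_d(\vec x)| \;\le\; \sum_{\vec n:\,\max_j n_j > d}|c_{\vec n}| \;\le\; 2^k M \sum_{\ell=1}^{k} \left(\sum_{n_\ell > d}\rho_\ell^{-n_\ell}\right)\prod_{j\ne\ell}\left(\sum_{n_j\ge 0}\rho_j^{-n_j}\right).
\end{equation}
Summing the geometric series gives $\sum_{n_\ell>d}\rho_\ell^{-n_\ell} = \rho_\ell^{-d}/(\rho_\ell-1)$ and $\sum_{n_j\ge 0}\rho_j^{-n_j}=\rho_j/(\rho_j-1)$; using $\rho_j\ge\rho_*$, the fact that $\rho\mapsto\rho/(\rho-1)$ is decreasing for $\rho>1$, and $\rho_\ell^{-d}\le\rho_*^{-d}$, each of the $k$ summands is at most $\rho_*^{-d}(\rho_*-1)^{-1}\left(\rho_*/(\rho_*-1)\right)^{k-1}$. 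Adding the $k$ of them and multiplying by $2^k M$ collapses to exactly $C_{M,k,\rho_*}\,\rho_*^{-d}$ with $C_{M,k,\rho_*}=Mk\rho_*^{-1}\left(2\rho_*/(\rho_*-1)\right)^k$, which is the claim.

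The main obstacle, though technical rather than deep, is the coefficient-decay step: one must verify that the intermediate coefficients $a_{n_1}, a_{n_1,n_2},\ldots$ remain jointly analytic in the un-peeled variables and inherit the uniform bound $M$ on the corresponding product of Bernstein ellipses, so that Lemma~\ref{lemma:bernstein} may be invoked afresh at each stage (this uses analyticity of $f$ on the \emph{open} product domain together with a contour-deformation / differentiation-under-the-integral argument). A tempting alternative — writing $f-S_d=\sum_{\ell}P_1\cdots P_{\ell-1}(I-P_\ell)f$ as a telescoping sum of one-dimensional Chebyshev-truncation projectors $P_\ell$ — is conceptually cleaner but injects Lebesgue-constant factors growing like $(\log d)^{O(k)}$, and so does not reproduce the clean constant $C_{M,k,\rho_*}$; the coefficient-sum argument above is what delivers the stated bound.
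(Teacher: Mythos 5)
Your proposal is correct and follows essentially the same route as the paper's proof: iterate the single-variable Bernstein/Chebyshev bound (Lemma~\ref{lemma:bernstein}) one variable at a time to get $|c_{\vec n}|\le 2^kM\prod_j\rho_j^{-n_j}$ (with the same remark about analyticity of the intermediate coefficients on the remaining ellipses), then bound the tail over $\max_j n_j>d$ by a union bound and geometric sums, collapsing via $\rho_j\ge\rho_*$ to the stated constant. Your added observation about absolute/uniform convergence of the full series is a harmless refinement of the same argument.
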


\begin{proof}
First, fix $(x_2,\ldots, x_k)$. By Lemma \ref{lemma:bernstein}, $f(x_1,x_2,\ldots,x_k) = \sum_{n_1=0}^{\infty} c_{n_1}(x_2,\ldots, x_k) T_{n_1}(x_1)$ where $|c_{n_1}|\leq 2M\rho_1^{-n_1}$. Moreover,
\begin{equation}
    \left| f(x_1,\ldots, x_k) - \sum_{n_1=0}^{d} c_{n_1}(x_2,\ldots, x_k) T_{n_1}(x_1)\right|\leq \frac{2M\rho_1^{-d}}{\rho_1-1}
\end{equation}
uniformly in $x_1 \in [-1,1]$, where the coefficients $c_{n_1}(x_2,\ldots, x_k)$ extend analytically to $B(a_2)\times \cdots \times B(a_k)$~\cite{trefethen}. Next, fix $n_1$ and $(x_3,\ldots, x_k)$. Applying Lemma \ref{lemma:bernstein}, $c_{n_1}(x_2,\ldots, x_k) = \sum_{n_2} c_{n_1,n_2}(x_3,\ldots, x_k) T_{n_2}(x_2)$ with bounds $|c_{n_1,n_2}| \leq 2(2M\rho_{1}^{-n_1}) \rho_2^{-n_2} =2^2M\rho_1^{-n_1}\rho_2^{-n_2}$. Moreover, the $c_{n_1,n_2}(x_3,\ldots, x_k)$ extend analytically to $B(a_3)\times \cdots \times B(a_k)$. \par 
Proceeding inductively, we have
\begin{equation}
    f(\vec x) = \sum_{\vec n = (n_1,\ldots, n_k)} c_{\vec n} \prod_{j=1}^k T_{n_j}(x_j)
\end{equation}
where $|c_{\vec n}| < 2^k M\prod_{j=1}^k \rho_j^{-n_j}$. Defining the truncation $S_d(\vec x)$ as in Eq.~\eqref{eq:SDtruncation} and using $|T_n(x)|<1$ on the interval, we have 
\begin{subequations}
    \begin{align}
        \left|f(\vec x)-S_d(\vec x)\right| &\leq \sum_{\vec n \mid \exists n_j > d}|c_{\vec n}|  \\
        &\leq \sum_{\vec n \mid \exists n_j > d} 2^k M \prod_{j=1}^k \rho_j^{-n_j}\\
        &\leq \sum_{\ell=1}^k \,\,\sum_{\vec n\mid n_\ell > d} 2^k M \prod_{j=1}^k \rho_j^{-n_j}\\
        &\leq \sum_{\ell=1}^k 2^k M \left(\prod_{j\neq \ell} \frac{\rho_j}{\rho_j-1}\right) \frac{\rho_\ell^{-d}}{\rho_\ell-1}
    \end{align}
\end{subequations}
uniformly on $\vec x \in [-1,1]^k$. The first sum is over vectors $\vec n$ with at least one coordinate $>d$. In the third step, we bounded this sum by a union bound over the $k$ regions $n_j>d$. Finally, letting $\rho_* = \min_j \rho_j$ and using the fact that $\frac{\rho}{\rho-1}$ and $\rho^{-1}$ are both monotonically decreasing, we have 
\begin{subequations}
    \begin{align}
        \left|f(\vec x)-S_d(\vec x)\right| &\leq \sum_{\ell=1}^k 2^k M \left(\prod_{j\neq \ell} \frac{\rho_*}{\rho_*-1}\right) \frac{\rho_*^{-d}}{\rho_*-1}\\
        & = 2^k k M\left(\frac{\rho_*}{\rho_*-1}\right)^k \rho_*^{-d-1}
    \end{align}
\end{subequations}
as desired.

\end{proof}

\begin{lemma}\label{lemma:fvdg}
For pure state density matrices $\rho = \ketbra{a}{a}, \sigma = \ketbra{b}{b}$, we have $\frac12\|\rho-\sigma\|_1 \leq \|\ket{a}-\ket{b}\|_2$. 
\end{lemma}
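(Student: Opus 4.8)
The plan is to reduce the inequality to an elementary scalar estimate in the overlap $c=\braket{a}{b}$, which satisfies $|c|\le 1$ since $\ket a$ and $\ket b$ are unit vectors. First I would note that $\rho-\sigma$ is Hermitian, traceless (as $\tr{\rho}=\tr{\sigma}=1$), and of rank at most two (being a difference of two rank-one operators), so its nonzero eigenvalues are $\pm\lambda$ for a single $\lambda\ge 0$ and hence $\|\rho-\sigma\|_1=2\lambda$. To extract $\lambda$ I would pass through the Hilbert--Schmidt norm: $2\lambda^2=\tr{(\rho-\sigma)^2}=\tr{\rho^2}-2\tr{\rho\sigma}+\tr{\sigma^2}=2\bigl(1-|c|^2\bigr)$, using $\tr{\rho^2}=\tr{\sigma^2}=1$ and $\tr{\rho\sigma}=|\braket a b|^2=|c|^2$. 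This yields the familiar identity $\tfrac12\|\rho-\sigma\|_1=\sqrt{1-|c|^2}$.

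Next I would compute the right-hand side directly from the inner product: $\|\ket a-\ket b\|_2^2=\braket a a-\braket a b-\braket b a+\braket b b=2-2\Re c$. It then remains to prove $1-|c|^2\le 2-2\Re c$ for every $c\in\mathbb C$ with $|c|\le 1$, and take square roots. This follows immediately from $\Re c\le |c|$ together with $\bigl(2-2\Re c\bigr)-\bigl(1-|c|^2\bigr)=1-2\Re c+|c|^2\ge 1-2|c|+|c|^2=(1-|c|)^2\ge 0$.

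There is essentially no obstacle; the only point worth flagging is that the left side is invariant under rephasing $\ket a,\ket b$ while the right side is not, so the scalar estimate is genuinely needed only in its worst case — when $c$ is real and nonnegative — and holds a fortiori otherwise, with equality precisely when $\ket a=\ket b$. If one prefers to avoid the eigenvalue bookkeeping, the identity $\tfrac12\|\rho-\sigma\|_1=\sqrt{1-F^2}$ with fidelity $F=|\braket a b|$ (the pure-state case of the Fuchs--van de Graaf relations) can be quoted in its place, after which the same scalar bound closes the argument.
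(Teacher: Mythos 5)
Your proof is correct and follows essentially the same route as the paper: both reduce to the identity $\tfrac12\|\rho-\sigma\|_1=\sqrt{1-|c|^2}$ with $c=\braket{a}{b}$, compute $\|\ket a-\ket b\|_2^2=2-2\Re c$, and close with the elementary scalar bound using $\Re c\le|c|\le 1$ (the paper factors $1-|c|^2=(1+|c|)(1-|c|)\le 2(1-\Re c)$, which is the same estimate as your $(1-|c|)^2\ge 0$). Your explicit derivation of the trace-norm formula from the rank-two traceless structure merely fills in a step the paper states without proof.
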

\begin{proof}
Let $\alpha = \braket{a}{b}$. We have $\frac12\|\rho-\sigma\|_1 = \frac12 \tr{\sqrt{(\rho-\sigma)^\dagger(\rho-\sigma)}} = \sqrt{1-|\alpha|^2}$. Since $|\alpha|<1$, we have $1-|\alpha|^2  = (1+|\alpha|)(1-|\alpha|) \leq 2(1-|\alpha|) \leq 2(1-\Re \alpha) = \|\ket a- \ket b\|_2^2$. Hence $\frac12\|\rho-\sigma\|_1\leq \|\ket a- \ket b\|_2$. 
\end{proof}

\begin{lemma}\label{lem:trdist}
    For two density matrices $\rho_{AB}$ and $\sigma_{AB}$ and their reduced density matrices $\rho_A,\sigma_A$, we have 
    \begin{equation}
        \|\rho_A-\sigma_A\|_1 \leq \|\rho_{AB}-\sigma_{AB}\|_1. 
    \end{equation}
    That is, the trace distance can only decrease under partial trace. 
\end{lemma}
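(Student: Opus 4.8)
The plan is to use the Jordan--Hahn decomposition of $\Delta \equiv \rho_{AB} - \sigma_{AB}$ together with positivity of the partial trace and the triangle inequality for $\|\cdot\|_1$. Since $\Delta$ is Hermitian, I would first write $\Delta = \Delta_+ - \Delta_-$ with $\Delta_\pm \geq 0$ having mutually orthogonal supports, so that $\|\Delta\|_1 = \Tr\Delta_+ + \Tr\Delta_-$, using that the trace norm of a positive operator equals its trace.

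Next I would push this through the channel $\Tr_B$, which is linear and positive. Thus $\rho_A - \sigma_A = \Tr_B\Delta = \Tr_B\Delta_+ - \Tr_B\Delta_-$ with each $\Tr_B\Delta_\pm \geq 0$. The triangle inequality for the trace norm, followed by $\|X\|_1 = \Tr X$ for $X\geq 0$ and the trace-preservation property $\Tr(\Tr_B X) = \Tr X$, then gives
\begin{equation}
\|\rho_A - \sigma_A\|_1 \leq \|\Tr_B\Delta_+\|_1 + \|\Tr_B\Delta_-\|_1 = \Tr\Delta_+ + \Tr\Delta_- = \|\Delta\|_1,
\end{equation}
which is exactly the claim.

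I do not expect a genuine obstacle; the one point to be careful about — and the reason the statement is an inequality rather than an equality — is that $\Tr_B\Delta_+$ and $\Tr_B\Delta_-$ generally fail to retain orthogonal supports, so one cannot assert $\|\Tr_B\Delta_+ - \Tr_B\Delta_-\|_1 = \Tr\Delta_+ + \Tr\Delta_-$ and must instead fall back on the triangle inequality. An alternative route that sidesteps Jordan--Hahn entirely is the variational identity $\tfrac12\|\rho-\sigma\|_1 = \max_{0\leq P\leq I}\Tr(P(\rho-\sigma))$: taking the optimal $P$ for $\rho_A-\sigma_A$ and lifting it to $P\otimes I_B$, which still obeys $0 \leq P\otimes I_B \leq I_{AB}$, yields $\tfrac12\|\rho_A-\sigma_A\|_1 = \Tr((P\otimes I_B)(\rho_{AB}-\sigma_{AB})) \leq \tfrac12\|\rho_{AB}-\sigma_{AB}\|_1$. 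Either argument goes through verbatim with $\Tr_B$ replaced by an arbitrary completely positive trace-preserving map, recovering the more general contractivity of the trace distance that is invoked in the proofs of Theorems \ref{thm:1nqs} and \ref{thm:O1nqs}.
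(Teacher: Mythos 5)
Your proposal is correct, but your primary argument takes a genuinely different route from the paper. The paper proves the lemma via the variational characterization $\|X\|_1 = \sup_{-I\leq U\leq I}\Tr[UX]$: it writes $\|\rho_A-\sigma_A\|_1$ as a supremum over observables $U_A$ on $A$, lifts the optimizer to $U_A\otimes I_B$ (which still satisfies the operator constraint), and observes that such lifted observables form a subset of those admissible for $\|\rho_{AB}-\sigma_{AB}\|_1$. Your main argument instead uses the Jordan--Hahn decomposition $\Delta=\Delta_+-\Delta_-$, the positivity and trace preservation of $\Tr_B$, and the triangle inequality; this is sound, and your remark about why orthogonality of supports is lost under $\Tr_B$ (hence inequality rather than equality) is exactly the right point of care. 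The trade-off: your decomposition route is more "primal" and shows transparently that only positivity plus trace preservation of the map is needed (so it applies verbatim to any positive trace-preserving map, with complete positivity not required), while the paper's duality route is a one-line lifting argument that avoids constructing $\Delta_\pm$ altogether. Your alternative variational argument with $\tfrac12\|\rho-\sigma\|_1=\max_{0\leq P\leq I}\Tr\bigl(P(\rho-\sigma)\bigr)$ is essentially the paper's proof in a different normalization (valid here because $\rho_{AB}$ and $\sigma_{AB}$ have equal trace), so either write-up would serve as a faithful replacement for the paper's Lemma \ref{lem:trdist}, and both deliver the general contractivity invoked in Theorems \ref{thm:1nqs} and \ref{thm:O1nqs}.
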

\begin{proof}
    The trace distance has the variational definition $\|\rho\|_1 = \sup_{-I \leq U\leq I} \Tr[U\rho]$. Then it follows that 
    \begin{subequations}
        \begin{align}
            \|\rho_A-\sigma_A\|_1 &= \sup_{-I_A\leq U_A\leq I_A} \Tr[U_A(\rho_A-\sigma_A)]\\
            &= \sup_{-I_A\leq U_A\leq I_A} \Tr[(U_A \otimes I_B)(\rho_{AB}-\sigma_{AB})]\\
            &\leq \sup_{-I_{AB}\leq U_{AB}\leq I_{AB}} \Tr[U_{AB}(\rho_{AB}-\sigma_{AB})]\\
            &= \|\rho_{AB}-\sigma_{AB}\|_1
        \end{align}
    \end{subequations}
where in the second line we've used $\Tr[\Tr_A[U_{AB}]] = \Tr[U_{AB}]$ and in the third line we've used the fact that matrices of the form $U_A\otimes I_B$ are a subset of matrices of the form $U_{AB}$. 
\end{proof}

\bibliography{ref}